\newcommand{\wh}{\widehat}
\newcommand{\wt}{\widetilde}
\newcommand{\bA}{{\mathbf A}}
\newcommand{\bH}{{\mathbf H}}
\newcommand{\bI}{{\mathbf I}}
\newcommand{\bK}{{\mathbf K}}
\newcommand{\bM}{{\mathbf M}}
\newcommand{\bQ}{{\mathbf Q}}
\newcommand{\bP}{{\mathbf P}}
\newcommand{\bZ}{{\mathbf Z}}
\newcommand{\bff}{{\mathbf f}}
\newcommand{\bOmega}{\boldsymbol{\Omega}}
\newcommand{\bomega}{\boldsymbol{\omega}}
\newcommand{\btheta} {\boldsymbol{\theta}}
\newcommand{\bxi} {\boldsymbol{\xi}}
\newcommand{\bzeta} {\boldsymbol{\zeta}}
\newtheorem{theorem}{Theorem}
\newtheorem{lemma}{Lemma}
\begin{document}

\def\spacingset#1{\renewcommand{\baselinestretch}%
{#1}\small\normalsize} \spacingset{1}


  \title{\bf Model Averaging Estimation for Partially Linear Functional Score Models}
  \author{ Shishi Liu\\
    Center for Applied Statistics, School of Statistics, Renmin University\\
    of China, Beijing, 100872, China\\
    Hao Zhang\\
    School of Statistics, Renmin University of China, Beijing,\\
    100872, China\\
    and \\
    Jingxiao Zhang \thanks{
    Corresponding author. E-mail: zhjxiaoruc@163.com}\hspace{.2cm}\\
    Center for Applied Statistics, School of Statistics, Renmin University\\
    of China, Beijing, 100872, China\\}
  \date{}
  \maketitle


\bigskip
\begin{abstract}
This paper is concerned with model averaging estimation for partially linear functional score models. These models predict a scalar response using both parametric effect of scalar predictors and non-parametric effect of a functional predictor. Within this context, we develop a Mallows-type criterion for choosing weights. The resulting model averaging estimator is proved to be asymptotically optimal under certain regularity conditions in terms of achieving the smallest possible squared error loss. Simulation studies demonstrate its superiority or comparability to information criterion score-based model selection and averaging estimators. The proposed procedure is also applied to two real data sets for illustration. That the components of nonparametric part are unobservable leads to a more complicated situation than ordinary partially linear models (PLM) and a different theoretical derivation from those of PLM.
\end{abstract}

\noindent%
{\it Keywords:}  Asymptotic optimality; Mallows-type criterion; Functional data; Model average

\spacingset{1.5} 

\section{Introduction}
\label{sec1}

Functional data analysis has received growing attention in recent decades, owing to its great flexibility and widespread application in complex data, refer to a comprehensive introduction in \cite{ramsaysilverman2005}. Functional regression models feature prominently in functional data analysis literature, see \cite{morris2015}. A large amount of work has been devoted to regression models with functional predictors, of which the most widely used are functional linear models (FLM). In FLM, a scalar response is associated with the inner product of a functional predictor and an unknown coefficient function, refer to \cite{caihall2006, caiyuan2012, cardotetal1999, cardotetal2003, yaoetal2005a}.
Functional data can be viewed as elements from a functional space such as Hilbert space and reproducing kernel Hilbert space (RKHS). Therefore, dimension reduction is required to address the infinite dimensionality issue in functional data analysis. The popular strategy is to project the functional data into a low-rank functional subspace and take their projections as predictors in regression models. One of the most well-studied dimension reduction tool for functional data is functional principal component analysis (FPCA), discussed in \cite{ricesilverman1991, yaoetal2005b, halletal2006}. Denote $X(t)$ a random function of $L^2(\mathcal{T})$, $t\in\mathcal{T}$, with mean function $\nu(t)$ and covariance function $\mathcal{C}(s,t) = cov\{X(s),X(t)\}$. Classical FPCA take eigen-decomposition of the corresponding covariance operator as $(\mathcal{C} \psi_k)(t) = \lambda_k \psi_k(t)$, $k=1,2,\ldots$, where $\lambda_1 \geq \lambda_2 \geq \cdots$ are eigenvalues and $\{\psi_1(t),\psi_2(t),\ldots\}$ is a set of eigenfunctions. Thus, $X(t)$ has the Karhunen-Lo\`eve expansion
\[
X(t) = \nu(t) + \sum_{k=1}^{\infty}\zeta_k\psi_k(t),
\]
where $\zeta_k = \int_{\mathcal{T}}(X(t)-\nu(t))\psi_k(t)dt$ represents the score associate with the $k$-th eigenfunction, which is called functional principal component (FPC) score.
And researchers use FPC scores associated with leading eigenfunctions as predictors in regression models to specify the effect of functional predictor.

Although widely used, linear models can be restrictive in terms of general applications and many researchers have investigated nonlinear functional regression models, such as \cite{james2002, mullerstadtmuller2005, crainiceanuetal2009, lietal2010, mulleryao2008}.
Some researchers \cite{sangetal2018, wongetal2019} incorporated the effects of both the trajectories and scalar covariates on the prediction of the response.
In these models, the effect of functional predictors is represented by its transformed FPC scores whereas scalar predictors are modeled linearly. The estimation for such models typically truncates the nonparametric part to several leading FPCs, 
see \cite{mulleryao2008, zhuetal2014} also.
The estimation procedure above can be seen as a method for model selection because it results in a parsimonious model by truncating or imposing regularized penalties. Thus, model uncertainty arises from deciding which components are retained in candidate models.

This study considers the partially linear functional score (PLFS) model which describe the connection of a functional predictor and scalar predictors to a scalar response variable of interest, while follows a different estimation strategy namely model averaging. Recall that model selection methods aim to pick out one best model among a set of candidate models, and in this regard, various model selection criteria have been studied, such as Akaike information criterion by \cite{akaike1973} and Bayesian information criterion in \cite{schwarz1978}.
Model averaging, as an alternative to model selection, combines all candidate models by assigning weights to different models to address model uncertainty. Bayesian model averaging has been a popular approach. And Hoeting et~al. (1999) provides a thorough overview of this direction \cite{hoetingetal1999}. 
A rapidly growing body of literature with the frequentist paradigm for model averaging has been developed, like \cite{yang2001, yang2003, hansen2007, liangetal2011, hansenracine2012}. The choice of weight plays a fundamental and crucial role in model averaging because it determines the performance of the resulting estimator. Information criterion-based weighting was advocated by \cite{bucklandetal1997, hjortclaeskens2003}, which suggested taking weights based on AIC, BIC, or focused information criterion scores of candidate models. Hansen and Racine (2012) proposed jackknife model averaging \cite{hansenracine2012} and similar weighting procedures based on cross-validation were developed by \cite{chenghansen2015, gaoetal2016}. Both jackknife and cross-validation model averaging may become computationally complex when processing a large sample.
Hansen (2007) proposed Mallows criterion, which suggests weights that minimize this criterion, and established the asymptotic optimality for the model averaging estimator \cite{hansen2007}. Following their work, corresponding Mallows-type 
criterions for weight selection in linear mixed-effects model, partially linear model (PLM), and varying-coefficient PLM were established in \cite{zhangetal2014, zhangwang2019, zhuetal2019}, respectively.

To our best knowledge, the literature on functional data contains few works in which the technique of model averaging is applied to regression models. For example, Zhu et~al. (2018) proposed optimal model averaging for partially linear FLM based on Mallows-type criterion \cite{zhuetal2018}. Zhang et~al. (2018), Zhang and Zou (2020) developed a cross-validation model-averaging estimator based on FLM and generalized FLM, respectively \cite{zhangetal2018, zhangzou2020}. In this study, we investigated Mallows-type model averaging for PLFS model. As mentioned above, because of the intrinsically infinite-dimensionality of functional data, a dimension reduction procedure is required, and therefore the components retained for scalar predictors and FPC scores could be expected to have an impact on prediction performance. Because model selection methods pose a risk of selecting an inferior model, we take advantage of model averaging method. This method assigns model weights such that Mallows-type criterion associated with the squared error loss is minimized.

Our work differs from that of \cite{zhangwang2019}, which considered optimal model averaging for PLM. As we use FPC scores to represent the effect of functional predictor, which are unobservable and needed to be estimated first, this situation is more complicated than that of the ordinary PLM. The theoretical derivation of asymptotic optimality for PLFS model is quite different from the previous work owing to the estimated FPC scores. Furthermore, the model uncertainty associated with PLM results from both the choice of covariates and the decision to which part, parametric or nonparametric, the covariate should enter. However, for PLFS model, the uncertainty mainly arises from deciding which scalar covariates and FPC scores should be included in the list of candidates because of the inherent division between scalar predictors and the functional predictor. Besides, this study is also different from \cite{zhuetal2018} in that we handles nonparametric effect whereas they only deal with linear effects. 

The remainder of this paper is organized as follows. Section 2 presents model setup and model averaging estimator. The asymptotic optimality of the model averaging estimator is also established in Section 2. Section 3 compares the finite sample performance of the proposed estimator with several information criterion-based model selection and averaging estimators. The proposed procedure is subsequently applied to real data in Section 4. Section 5 concludes our work with a discussion. All proofs are given in Appendix.

\section{Methodology}\label{sec2}

\subsection{Model and estimator}\label{sec21}

Let $Y$ be a scalar response variable associated with a scalar predictor vector $\bZ$ and a functional predictor $X(t)$, $t\in \mathcal{T}$, and let $\{Y_i, \bZ_i, X_i(\cdot)\}_{i=1}^n$ be independent identically distributed (iid) copies of $\{Y, \bZ, X(\cdot)\}$. The relationship between the response and predictors is modeled as
\[
Y_i = m(\bZ_i, X_i) + \varepsilon_i,
\]
where $\varepsilon_i$ are random errors.
Direct modeling $m(\cdot)$ is adversely affected by the ``curse of dimensionality''. Thus, many popular alternatives are developed, such as FLM and functional additive model \cite{mulleryao2008}, modelling the effect of $X(t)$ through its FPC scores based on FPCA. We follow a similar strategy and simplify the modelling as follows.

Denote by $\bzeta_{i}=(\zeta_{i1}, \zeta_{i2},\ldots)$ the sequence of FPC scores of $X_i(t)$ associated with eigenvalues $\{\lambda_1,\lambda_2,\ldots\}$ satisfying $\lambda_1\geq \lambda_2 \geq \cdots > 0$. In addition, $\bxi_i=(\xi_{i1},\xi_{i2},\ldots)^T$ represents the sequence of transformed FPC scores, i.e. $\xi_{ik}=\Phi\big(\lambda_k^{-1/2}\zeta_{ik}\big)$, where $\Phi(\cdot)$ is a continuously differentiable map from $\mathbb{R}$ to $[0,1]$. The transformed FPC scores $\bxi_{i}$ can help to avoid possible scale issues. For simplicity, we take $\Phi(\cdot)$ as a suitable cumulative distribution function (CDF), such as standard Gaussian CDF. If $\zeta_{ik}$ approximately follows Gaussian distribution, $\xi_{ik}$ will almost be uniform in $[0,1]$.
Now consider our PLFS model
\begin{equation}\label{model}
	Y_i = \mu_i + \varepsilon_i = \bZ_i^T\btheta + \bff(\bxi_i) + \varepsilon_i,
\end{equation}
where $\varepsilon=(\varepsilon_1,\ldots,\varepsilon_n)^T$ is random error with conditional mean 0 and variance matrix $\bOmega=\mbox{diag}(\sigma_1^2,\ldots,\sigma_n^2)$.

We use $M$ candidate models to approximate the true PLFS model, where $M$ is allowed to diverge to infinity as $n\to \infty$. The $m$-th candidate PLFS model includes $p_m$ regressors in $\bZ_i$ and $q_m$ regressors in $\bxi_i$ ($m=1,\ldots,M$),
\[
\begin{aligned}
	Y_i &= \mu_{(m),i} + \varepsilon_{(m),i} \\
	&= \bZ_{(m),i}^T\btheta_{(m)} + \bff_{(m)}\big(\bxi_{(m),i}\big) + \varepsilon_{(m),i},
\end{aligned}
\]
where $\bZ_{(m),i}$ is a $p_m\times 1$ vector, $\btheta_{(m)}$ is the corresponding unknown coefficients, $\bxi_{(m),i}$ is a $q_m\times 1$ vector, $\bff_{(m)}$ is an unknown function from $[0,1]^{q_m}$ to $\mathbb{R}$. And $\varepsilon_{(m),i}$ contains the approximation error of the $m$-th candidate and random error.

The kernel smoothing method \cite{speckman1988} is used in estimation.
Denote $\mathcal{K}_{h_m}(\cdot) = \prod_{l=1}^{q_m} k_{h_{m,l}}(\cdot/h_{m,l})$ a product kernel function, where $k_{h_{m,l}}$ is a univariate kernel function and $h_{m,l}$ is the scalar bandwidth. We take $h_{m,l}=h_m$ for simplicity, $l=1,\ldots,q_m$. Furthermore, let $\bK_{(m)} = \big(K_{(m),ij}\big)$ be a $n\times n$ smoother matrix with
\[
K_{(m),ij} = \frac{\mathcal{K}_{h_m}\big(\bxi_{(m),i}-\bxi_{(m),j}\big)}{\sum_{j'=1}^n\mathcal{K}_{h_m}\big(\bxi_{(m),i}-\bxi_{(m),j'}\big)}.
\]
Then, the suggested kernel smoothing estimators of $\btheta_{(m)}$ and $\bff_{(m)}\big(\bxi_{(m)}\big)$ are as follows,
\[
\begin{aligned}
	& \wt{\btheta}_{(m)} = \big(\wt{\bZ}_{(m)}^T\wt{\bZ}_{(m)}\big)^{-1}\wt{\bZ}_{(m)}^T(\bI-\bK_{(m)})Y, \\
	& \wt{\bff}_{(m)}(\bxi_{(m)}) = \bK_{(m)}(Y-\bZ_{(m)}\wt{\btheta}_{(m)}),
\end{aligned}
\]
where $\wt{\bZ}_{(m)} = (\bI-\bK_{(m)})\bZ_{(m)}$. Obviously, $\wt{\btheta}_{(m)}$ is actually a least square estimate and $\wt{\bff}_{(m)}$ is a Nadaraya-Watson estimator.
Therefore, the estimator of $\mu$ under the $m$-th candidate is given by
\[
\begin{aligned}
	\wt{\mu}_{(m)} &= \bZ_{(m)}\wt{\btheta}_{(m)} + \wt{\bff}_{(m)}(\bxi_{(m)}) \\
	&= \wt{\bZ}_{(m)}\big(\wt{\bZ}_{(m)}^T\wt{\bZ}_{(m)}\big)^{-1}\wt{\bZ}_{(m)}^T(\bI-\bK_{(m)})Y + \bK_{(m)}Y \\
	&\equiv \bP_{(m)}Y.
\end{aligned}
\]
Let $\wt{\bP}_{(m)} \equiv \wt{\bZ}_{(m)}\big(\wt{\bZ}_{(m)}^T\wt{\bZ}_{(m)}\big)^{-1}\wt{\bZ}_{(m)}^T$ which is idempotent, and $\bP_{(m)} \equiv \wt{\bP}_{(m)}\big(\bI-\bK_{(m)}\big)+\bK_{(m)}$.

\subsection{Weight choice criterion}\label{sec22}

Let $\bomega=(\omega_1, \ldots, \omega_M)^T$ be a weight vector in the unit simplex of $\mathbb{R}^M$,
\[
\mathcal{H}_n = \Big\{\bomega\in [0,1]^M: \sum_{m=1}^M \omega_m = 1\Big\}.
\]
Then the model averaging estimator of $\mu$ follows as
\[
\wt{\mu}(\bomega) = \sum_{m=1}^M \omega_m\wt{\mu}_{(m)} = \sum_{m=1}^M \omega_m\bP_{(m)}Y = \bP(\bomega)Y,
\]
where $\bP(\bomega) = \sum_{m=1}^M \omega_m\bP_{(m)}$. Define the square error loss function and corresponding conditional risk function as
\[
L_n(\bomega) = \|\wt{\mu}(\bomega)-\mu\|^2 = \|\bP(\bomega)Y-\mu\|^2
\]
and
\[
\begin{aligned}
	R_n(\bomega) &= \mathbb{E}(L_n(\bomega)|\bZ, X) \\
	&= \|(\bP(\bomega)-\bI)\mu\|^2 + tr(\bP^T(\bomega)\bP(\bomega)\bOmega),
\end{aligned}
\]
respectively. We may select the optimal weights based on the following Mallows-type criterion
\[
C_n(\bomega) = \|Y-\wt{\mu}(\bomega)\|^2 + 2tr(\bP(\bomega)\bOmega).
\]
It is observed that $\mathbb{E}(C_n(\bomega)|\bZ, X) = R_n(\bomega) + tr(\bOmega)$. Thus, $C_n(\bomega)$ is an unbiased estimator of the expected in-sample squared error loss plus a constant, which is similar to the Mallow's criterion proposed in \cite{hansen2007}. Because $tr(\bOmega)$ is unrelated to $\bomega$, the optimal weights can be obtained by minimizing $C_n(\bomega)$ if $\bOmega$ is known.

However, $\bzeta$ and $\bxi$ are unobservable, so the estimation procedure above cannot be implemented directly. For the sake of practical applicability, we replace the original $\bxi_{(m)}$ with its estimator $\wh{\bxi}_{(m)}$, which is common practice. That is, we first estimate the FPC scores using the previously proposed FPCA method as. That is, suppose the discrete noisy measurements of $X_i(t)$ are available,
\[
X_{ij} = X_i(t_{ij}) + e_{ij},\:i=1,\ldots,n,\:j=1,\ldots,N_i,
\]
where $e_{ij}$'s are independent measurement errors with mean 0 and variance $\sigma_e^2$. We focus on the densely observed trajectories such that $X_i(t)$ can be effectively recovered from $\{(t_{ij}, X_{ij}):j=1,\ldots,N_i\}$ by a smoother operator \cite{kongetal2016,wongetal2019}. The recovered function is denoted by $\wt{X}_i(t)$. Then the mean and covariance functions of $X(t)$ can be estimated by
\[
\begin{aligned}
	\wh{\nu}(t) &= \frac{1}{n}\sum_{i=1}^n\wt{X}_i(t), \\
	\wh{\mathcal{C}}(s,t) &= \frac{1}{n}\sum_{i=1}^n\Big(\wt{X}_i(s)-\wh{\nu}(s)\Big)\Big(\wt{X}_i(t)-\wh{\nu}(t)\Big)^T.
\end{aligned}
\]
The spectral decomposition $\wh{\mathcal{C}}(s,t) = \sum_{k=1}^{n-1} \wh{\lambda}_k\wh{\psi}_k(s)\wh{\psi}_k(t)$ yields sample eigenvalues $\{\wh{\lambda}_k\}$ and eigenfunctions $\{\wh{\psi}_k\}$. The estimates for FPC scores are subsequently obtained by
\[
\begin{aligned}
	\wh{\zeta}_{ik} &= \int_{\mathcal{T}}(\wt{X}_i(t)-\wh{\nu}(t))\wh{\psi}_{k}(t)dt,\\
	\wh{\xi}_{ik} &= \Phi\big(\wh{\lambda}_k^{-1/2}\wh{\zeta}_{ik}\big).
\end{aligned}
\]
Once we get $\wh{\bxi}_{(m)}$, the original quantities listed above have their substitutes in practice, as shown below.

The smoother matrix is now denoted as $\wh{\bK}_{(m)}$ with $i,j$-element
\[
\wh{K}_{(m),ij} = \frac{\mathcal{K}_{h_m}\big(\wh{\bxi}_{(m),i}-\wh{\bxi}_{(m),j}\big)}{\sum_{j'=1}^n\mathcal{K}_{h_m}\big(\wh{\bxi}_{(m),i}-\wh{\bxi}_{(m),j'}\big)}.
\]
The final kernel smoothing estimators of $\btheta_{(m)}$ and $\bff_{(m)}$ are given by
\[
\begin{aligned}
	& \wh{\btheta}_{(m)} = \big(\wh{\bZ}_{(m)}^T\wh{\bZ}_{(m)}\big)^{-1}\wh{\bZ}_{(m)}^T(\bI-\wh{\bK}_{(m)})Y, \\
	& \wh{\bff}_{(m)}(\wh{\bxi}_{(m)}) = \wh{\bK}_{(m)}\big(Y-\bZ_{(m)}\wh{\btheta}_{(m)}\big),
\end{aligned}
\]
where $\wh{\bZ}_{(m)} = \big(\bI - \wh{\bK}_{(m)}\big)\bZ_{(m)}$. Besides, the $m$-th estimator and the model averaging estimator of $\mu$ are
\[
\begin{aligned}
	\wh{\mu}_{(m)} &= \bZ_{(m)}\wh{\btheta}_{(m)} + \wh{\bff}_{(m)}(\wh{\bxi}_{(m)}) \\
	&= \wh{\bZ}_{(m)}\big(\wh{\bZ}_{(m)}^T\wh{\bZ}_{(m)}\big)^{-1}\wh{\bZ}_{(m)}^T\big(\bI-\wh{\bK}_{(m)}\big)Y + \wh{\bK}_{(m)}Y \\
	&\equiv \wh{\bP}_{(m)}Y, \\
	\wh{\mu}(\bomega) &= \sum_{m=1}^M \omega_m\wh{\mu}_{(m)} = \sum_{m=1}^M \omega_m\wh{\bP}_{(m)}Y = \wh{\bP}(\bomega)Y,
\end{aligned}
\]
where $\wh{\bP}(\bomega) = \sum_{m=1}^M \omega_m \wh{\bP}_{(m)}$.

Denote $\overline{\bP}_{(m)} \equiv \wh{\bZ}_{(m)}\big(\wh{\bZ}_{(m)}^T\wh{\bZ}_{(m)}\big)^{-1}\wh{\bZ}_{(m)}^T$ which is still idempotent, and $\wh{\bP}_{(m)} \equiv \overline{\bP}_{(m)}\big(\bI-\wh{\bK}_{(m)}\big)+\wh{\bK}_{(m)}$.
The modified loss, conditional risk, and Mallows-type criterion are transformed into
\[
\begin{aligned}
	\wh{L}_n(\bomega) &= \|\wh{\mu}(\bomega) - \mu\|^2 = \|\wh{\bP}(\bomega)Y - \mu\|^2, \\
	\wh{R}_n(\bomega) &= \mathbb{E}(L_n(\bomega)|\bZ, X), \\
	\wh{C}_n(\bomega) &= \|Y - \wh{\mu}(\bomega)\|^2 + 2tr(\wh{\bP}(\bomega)\bOmega).
\end{aligned}
\]

Let $\wt{\bomega} = \arg\min_{\bomega\in \mathcal{H}_n} \wh{C}_n(\bomega)$. However, the covariance matrix $\bOmega$ is unknown and the criterion $\wh{C}_n(\bomega)$ is therefore still computationally infeasible. Hence, we should estimate $\bOmega$ to obtain a feasible criterion. Following Hansen (2007) \cite{hansen2007}, we estimate $\bOmega$ based on the largest candidate model indexed by $M^{*} = \arg\max_{1\leq m\leq M}(p_m+q_m)$, leading to an estimator
\begin{equation}\label{whOmg}
	\wh{\bOmega} = \mbox{diag}\big(\hat{\epsilon}_{(M^*),1}^2, \ldots, \hat{\epsilon}_{(M^*),n}^2\big),
\end{equation}
where $\big(\hat{\epsilon}_{(M^*),1}, \ldots, \hat{\epsilon}_{(M^*),n}\big)^T = Y - \wh{\mu}_{(M^*)}$.

When $\bOmega$ is replaced by $\wh{\bOmega}$, we select the optimal weights by
\begin{equation}\label{whCn}
	\begin{aligned}
		\wh{\bomega} &= \arg\min_{\bomega} \wh{C}_n(\bomega)|_{\bOmega = \wh{\bOmega}} \\
		&= \arg\min_{\bomega}\|Y-\wh{\mu}(\bomega)\|^2 + 2tr(\wh{\bP}(\bomega)\wh{\bOmega}),
	\end{aligned}
\end{equation}
which can be treated as a feasible counterpart of $\wh{C}_n(\bomega)$.
Let $\bH = \big(Y-\wh{\mu}_{(1)},\ldots,Y-\wh{\mu}_{(M)}\big)$ and $b = \big(tr(\wh{\bP}_{(1)}\wh{\bOmega}),\ldots,tr(\wh{\bP}_{(M)}\wh{\bOmega})\big)^T$. It is clear that (\ref{whCn}) is a standard quadratic programming problem of the form
\[
\begin{aligned}
	& \min_{\bomega} \wh{C}_n(\bomega)|_{\bOmega = \wh{\bOmega}} = \min_{\bomega} \bomega^T\bH^T\bH\bomega + 2\bomega^T b \\
	& \quad \mbox{subject to}\quad \mathbf{1}^T\bomega=1 \: \mbox{and}\: \bomega\geq 0,
\end{aligned}
\]
where $\mathbf{1}$ is a vector with all entries equal to 1. The problem can be efficiently optimized by the R package {\it quadprog}\footnote{S original by Berwin A. \& Turlach R port by Andreas W. (2019). quadprog: Functions to Solve Quadratic Programming Problems. R package version 1.5-7. https://CRAN.R-project.org/package=quadprog.}.


\subsection{Asymptotic optimality}\label{sec3}

Define $\eta_n = \inf_{\bomega} R_n(\bomega)$ and let $\lambda_{\max}(\cdot)$ denote the largest singular value of a matrix. Let $\bomega_m^0$ be a weight vector in which the $m$-th component is one and the others are zero. Let $\wt{p} = \max_m p_m$, $\wt{q} = \max_m q_m$, and $h = \min_m h_m$.

The following regularity conditions are required for the model averaging estimator to achieve asymptotic optimality.

(C1).
\[
\begin{aligned}
	& c_{\lambda}^{-1}k^{-\alpha}\leq \lambda_k \leq C_{\lambda}k^{-\alpha}, \\
	& \lambda_k - \lambda_{k+1} \geq C_{\lambda}^{-1}k^{-1-\alpha}, \quad k=1,2,\ldots
\end{aligned}
\]
Assume that $\alpha>1$ to ensure $\sum_{k=1}^{\infty} \lambda_k < \infty$.

(C2). $\mathbb{E}(\|X\|^4)<\infty$ and there exists a constant $C_{\zeta}>0$ such that $\mathbb{E}(\zeta_k^2\zeta_j^2)\leq C_{\xi}\lambda_k\lambda_j$ and $\mathbb{E}(\zeta_k^2-\lambda_k)^2<C_{\zeta}\lambda_k^2$, $\forall k\neq j$.

Condition (C1) assumes that the eigenvalues decay at a polynomial rate, which is a relatively slow rate and allows $X(t)$ to be flexibly modeled as a $L^2$ process. The second part of Condition (C1) requires the spacings among eigenvalues not to be too small to ensure the identifiability and consistency of sample eigenvalues and eigenfunctions. Condition (C1) is widely used in the FLM literature \cite{caihall2006,caiyuan2012}. Condition (C2) is a weak moment restriction on $X(t)$, which is satisfied if $X(t)$ is a Gaussian process. Note that the fourth-order moment condition is commonly used when a good convergence property of the second-order moment is desired. As we use FPC scores to model the effect of $X(t)$ in the proposed method, it is reasonable to use Condition (C2) to ensure the estimated FPC scores are effective.

(C3). The kernel function $k(\cdot)$ is a bounded symmetrical density with compact support and continuous and bounded first derivative function.

(C4). $\max_i \sum_{j=1}^n |K_{(m),ij}| = O(1)$ and $\max_j \sum_{i=1}^n |K_{(m),ij}| = O(1)$ uniformly for $m = 1, \ldots, M$, a.s.

Conditions (C3) and (C4) place certain restrictions upon the kernel method.
Condition (C3) is a common assumption on kernel functions. Condition (C4) bounds the elements of the smoother matrix, which has been discussed in \cite{speckman1988,zhangwang2019}. The smoother matrix constructed by Epanechnikov kernel in simulation study naturally satisfies this condition.

(C5). For some integer $G\geq 1$, $\max_i \mathbb{E}(\varepsilon_i^{4G}|\bZ_i, X_i)<\infty$ for all $i=1,\ldots,n$, a.s.

(C6). $M\eta_n^{-2G}\sum_{m=1}^M \big(R_n(\bomega_m^0)\big)^G = o_p(1)$.

(C7). $\lambda_{\min}\big(\bK_{(m)}\big)\geq c_{K} >0$, where $c_{K}$ is a constant, $m=1,\ldots, M$.

(C8). $\wt{q}=O(n^{1/(2+2\alpha)})$ where $\alpha$ relates to Condition (C1). $n^{-1/2}\wt{q}=o_p(1)$, $n^{1/2}\eta_n^{-1}\wt{q} = o_p(1)$, $\eta_n^{-1}\wt{q}^2 = o_p(1)$.


(C9). $\|\mu\|^2/n = O(1)$, a.s.

Conditions (C5), (C6) and (C9) are standard conditions for model averaging in the literature. Condition (C5) constrains the conditional moment of random errors, see \cite{hansen2007, zhangetal2014} also. Condition (C6) is commonly used to prove the optimality of model averaging under the scenario that all candidate models are misspecified, which requires $\eta_n$ goes to infinity and constrains rates of the number of candidate models $M$ and the risk of each candidate model; see \cite{wanetal2010, zhangwang2019, zhuetal2019}, among others. Condition (C9) concerns the sum of $\mu_i^2$ and is commonly used in the context of linear regression \cite{liangetal2011}. Condition (C7) is a technical condition in quantifying the order of $\lambda_{\max}\big(\bP_{(m)} - \wh{\bP}_{(m)}\big)$, which requires $\bK_{(m)}$ not being ill-conditioned. Condition (C8) constrains the growth rate of the number of FPC scores, which guarantees an effective estimation accuracy. And we show that under conditions (C7) and (C8), $n\eta_n^{-1}\max_{1\leq m\leq M} \lambda_{\max}\big(\bP_{(m)} - \wh{\bP}_{(m)}\big)=o_p(1)$ holds, which is commonly assumed in the literature \cite{zhangetal2014, zhangyu2018}.

The theorem provides the asymptotic optimality of the model averaging estimator when $\bOmega$ is known.
\begin{theorem}\label{th1}
	Under Conditions (C1)--(C9), it holds that
	\begin{equation}\label{A1}
		\frac{L_n(\wt{\bomega})}{\inf_{\bomega\in \mathcal{H}_n}L_n(\bomega)}\to 1
	\end{equation}
	in probability as $n\to \infty$.
\end{theorem}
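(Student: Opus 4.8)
The plan is to split the argument into two parts and then glue them. Part~(i): run the classical Mallows-averaging optimality argument for the \emph{infeasible} objects $\bP_{(m)}$, $R_n$, $L_n$, $C_n$ built from the true (unobservable) FPC scores. Part~(ii): establish the uniform perturbation bound
\[
\Delta_n:=\max_{1\le m\le M}\lambda_{\max}\big(\wh{\bP}_{(m)}-\bP_{(m)}\big)=o_p(\eta_n/n),
\]
which says that using the estimated scores is asymptotically harmless. Gluing then turns the optimality of $\wt{\bomega}=\arg\min_{\bomega}\wh{C}_n(\bomega)$ into (\ref{A1}).

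For Part~(i) I would follow the now-standard route of \cite{hansen2007,wanetal2010}, and of \cite{zhangwang2019} in the PLM case. Expanding $C_n(\bomega)-\|\varepsilon\|^2-L_n(\bomega)$ and $L_n(\bomega)-R_n(\bomega)$, each equals, after a Cauchy--Schwarz split where a bilinear-in-$\bomega$ term appears, a sum of linear-in-$\varepsilon$ functionals and centred quadratic forms in $\varepsilon$ attached to the matrices $\bP_{(m)}$. Since $\bP(\bomega)=\sum_m\omega_m\bP_{(m)}$ is linear in $\bomega$, the supremum over $\mathcal{H}_n$ of each piece is bounded by the corresponding vertex quantity at $\bomega_m^0$; a union bound over the $M$ candidates together with Whittle's/Rosenthal's moment inequality (using the $4G$-th moment bound (C5)) and Markov's inequality, all divided by $R_n(\bomega)\ge\eta_n$, reduce the whole thing to $M\eta_n^{-2G}\sum_m\{R_n(\bomega_m^0)\}^G=o_p(1)$, which is exactly (C6); here (C4) bounds $\lambda_{\max}(\bK_{(m)})$, hence $\lambda_{\max}(\bP_{(m)})$ via idempotence of $\wt{\bP}_{(m)}$, uniformly in $m$, and (C9) bounds $\|\mu\|^2$. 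This yields $\sup_{\bomega\in\mathcal{H}_n}|L_n(\bomega)/R_n(\bomega)-1|=o_p(1)$ and $\sup_{\bomega\in\mathcal{H}_n}|C_n(\bomega)-\|\varepsilon\|^2-L_n(\bomega)|/R_n(\bomega)=o_p(1)$, and (C6) also forces $\eta_n\to\infty$.

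Part~(ii) is where the functional structure enters, and it is the main obstacle. First I would quantify the FPCA error: because $\wh{\xi}_{ik}-\xi_{ik}=\Phi(\wh{\lambda}_k^{-1/2}\wh{\zeta}_{ik})-\Phi(\lambda_k^{-1/2}\zeta_{ik})$ with $\Phi$ of bounded derivative, it suffices to bound $|\wh{\lambda}_k^{-1/2}\wh{\zeta}_{ik}-\lambda_k^{-1/2}\zeta_{ik}|$; standard covariance-operator perturbation inequalities express $|\wh{\lambda}_k-\lambda_k|$, $\|\wh{\psi}_k-\psi_k\|$ and $|\wh{\zeta}_{ik}-\zeta_{ik}|$ through $\|\wh{\mathcal{C}}-\mathcal{C}\|$ and the trajectory-recovery error, where the eigengap part of (C1) and the fourth-moment bound (C2) make these rates explicit, and the polynomial decay $\lambda_k\asymp k^{-\alpha}$ together with the cap $\wt{q}=O(n^{1/(2+2\alpha)})$ of (C8) prevents the factors $\lambda_k^{-1/2}$ from blowing up; summing over the at most $\wt{q}$ coordinates bounds $\max_i\|\wh{\bxi}_{(m),i}-\bxi_{(m),i}\|$ uniformly in $m$. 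Second I would propagate this through the smoother matrices: the Lipschitz property of the product kernel (C3) gives $|\wh{K}_{(m),ij}-K_{(m),ij}|$ in terms of $\|\wh{\bxi}_{(m),i}-\bxi_{(m),i}\|+\|\wh{\bxi}_{(m),j}-\bxi_{(m),j}\|$ divided by the normalizing sums, which are bounded below by (C7) and above by (C4); then $\lambda_{\max}(\wh{\bK}_{(m)}-\bK_{(m)})\le\sqrt{\|\wh{\bK}_{(m)}-\bK_{(m)}\|_1\,\|\wh{\bK}_{(m)}-\bK_{(m)}\|_\infty}$ together with the (C4) row/column sums yields a rate, and the remaining parts of (C8), $n^{-1/2}\wt{q}=o_p(1)$, $n^{1/2}\eta_n^{-1}\wt{q}=o_p(1)$, $\eta_n^{-1}\wt{q}^2=o_p(1)$, are exactly what turn that rate into $o_p(\eta_n/n)$. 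Third I would assemble via
\[
\wh{\bP}_{(m)}-\bP_{(m)}=\overline{\bP}_{(m)}(\bK_{(m)}-\wh{\bK}_{(m)})+(\overline{\bP}_{(m)}-\wt{\bP}_{(m)})(\bI-\bK_{(m)})+(\wh{\bK}_{(m)}-\bK_{(m)}),
\]
using $\lambda_{\max}(\overline{\bP}_{(m)})=1$, $\lambda_{\max}(\bI-\bK_{(m)})=O(1)$, a subspace-perturbation bound for $\overline{\bP}_{(m)}-\wt{\bP}_{(m)}$ of the same order as $\lambda_{\max}(\wh{\bZ}_{(m)}-\wt{\bZ}_{(m)})=\lambda_{\max}((\bK_{(m)}-\wh{\bK}_{(m)})\bZ_{(m)})$, and the kernel bound above, to conclude $\Delta_n=o_p(\eta_n/n)$.

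Finally I would glue. From $\|\wh{\bP}(\bomega)-\bP(\bomega)\|\le\Delta_n$ and $\|Y\|^2=O_p(n)$ (by (C9) and (C5)), together with $\sup_{\bomega}|L_n(\bomega)/R_n(\bomega)-1|=o_p(1)$ from Part~(i), expanding squares gives $\sup_{\bomega\in\mathcal{H}_n}|\wh{L}_n(\bomega)-L_n(\bomega)|/R_n(\bomega)=o_p(1)$ and, since $\bOmega$ is known, $\sup_{\bomega\in\mathcal{H}_n}|\wh{C}_n(\bomega)-C_n(\bomega)|/R_n(\bomega)=o_p(1)$; hence also $\sup_{\bomega}|\wh{L}_n(\bomega)/R_n(\bomega)-1|=o_p(1)$ and $\sup_{\bomega}|g_n(\bomega)|/R_n(\bomega)=o_p(1)$ with $g_n:=\wh{C}_n-\|\varepsilon\|^2-\wh{L}_n$. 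Because $\wt{\bomega}$ minimizes $\wh{C}_n$, for every $\bomega\in\mathcal{H}_n$ we have $\wh{L}_n(\wt{\bomega})\le\wh{L}_n(\bomega)+|g_n(\bomega)|+|g_n(\wt{\bomega})|$; taking $\bomega=\arg\min_{\bomega\in\mathcal{H}_n}\wh{L}_n(\bomega)$ and using $|g_n(\cdot)|=o_p(1)R_n(\cdot)=o_p(1)\wh{L}_n(\cdot)$ gives $\wh{L}_n(\wt{\bomega})/\inf_{\bomega\in\mathcal{H}_n}\wh{L}_n(\bomega)\to1$ in probability. Converting back via $\wh{L}_n=R_n(1+o_p(1))=L_n(1+o_p(1))$ uniformly over $\mathcal{H}_n$ yields $L_n(\wt{\bomega})/\inf_{\bomega\in\mathcal{H}_n}L_n(\bomega)\to1$, which is (\ref{A1}). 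The genuinely new difficulty relative to \cite{zhangwang2019} is Part~(ii): propagating the FPCA estimation error --- through the interplay of the slow eigenvalue decay (C1), the moment bound (C2) and the number of retained scores (C8) --- into the kernel smoother and then the partial-linear projection, uniformly over all $M$ candidates, at the sharp rate $o_p(\eta_n/n)$.
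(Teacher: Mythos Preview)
Your proposal is correct and follows essentially the same route as the paper. The paper also (i) invokes the Wan--Zhang--Zou/Hansen argument for the infeasible quantities (their Eqs.~(\ref{A2})--(\ref{A5})), (ii) proves a perturbation lemma $\lambda_{\max}(\bP_{(m)}-\wh{\bP}_{(m)})=O_p(n^{-1/2}q_m+n^{-1}q_m^2)$ via FPCA error $\to$ kernel matrix $\to$ projection (their Lemmas~\ref{lem1}--\ref{lem3}), and (iii) uses it together with (C8) to make all cross terms $o_p(R_n)$.

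The only real differences are organizational. The paper expands $\wh{C}_n(\bomega)$ directly against $L_n(\bomega)$ and lists fourteen individual terms (\ref{A2})--(\ref{A15}) to bound, whereas you modularize: establish the infeasible statement for $C_n$ and $L_n$ first, then control $\sup_{\bomega}|\wh{C}_n-C_n|/R_n$ and $\sup_{\bomega}|\wh{L}_n-L_n|/R_n$ in one stroke from $\Delta_n=o_p(\eta_n/n)$, and finally convert $\wh{L}_n$ back to $L_n$. Your packaging is cleaner and reusable; the paper's term-by-term list is more explicit. One small correction: you attribute (C7) to lower-bounding the kernel normalizing sums, but in the paper (C7) enters at the projection-perturbation step, where a Davis--Kahan/$\sin\Theta$ bound (via \cite{yuetal2015}) is applied to $\wt{\bP}_{(m)}-\overline{\bP}_{(m)}$ and requires $\lambda_{\min}(\bK_{(m)})\ge c_K>0$; your ``subspace-perturbation bound'' step is exactly where (C7) is needed.
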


Theorem \ref{th1} illustrates the asymptotic optimality of $\wt{\bomega}$ in the sense that the squared loss based on the weight vector $\wt{\bomega}$ is asymptotically identical to that obtained using the infeasible optimal weight vector if $\bOmega$ is known. The proof of Theorem \ref{th1} is shown in the Appendix.

Following \cite{liuokui2013}, we process $tr(\wh{\bP}(\bomega)\bOmega)$ as one entity rather than considering $\bOmega$ in isolation, and estimate it by $\sum_{i=1}^n \hat{\epsilon}_{(M^*),i}^2\rho_{ii}(\bomega)$ where $\rho_{ii}(\bomega)$ is the $i$-th diagonal element of $\wh{\bP}(\bomega)$. Denote $\rho_{ii}^{(m)}$ as the $i$-th diagonal element of $\wh{\bP}_{(m)}$. When $\bOmega$ is replaced by its estimate $\wh{\bOmega}$ given in $(\ref{whOmg})$, provided that the following additional conditions are satisfied, it can be shown that the model averaging estimator based on $\wh{\bomega}$ shares same asymptotic optimality as $\wt{\bomega}$ in Theorem \ref{th1}.

(C10). There exists a constant $c$ such that $\max_{i}\rho_{ii}^{(m)}\leq cn^{-1}|tr\big(\wh{\bP}_{(m)}\big)|$, $\forall m = 1, \ldots, M$.

(C11). $tr\big(\bK_{(m)}\big) = O(h^{-\wt{q}})$ uniformly for $m\in \{1,\ldots,M\}$.

(C12). $\eta_n^{-1}\wt{p} = o_p(1)$ and $\eta_n^{-1}h^{-\wt{q}} = o_p(1)$.

Condition (C10) means that there should not be any dominant or strongly influential subjects as shown in \cite{li1987} and \cite{andrews1991}. Condition (C11) is similar to Condition (h) of \cite{speckman1988} and Condition 4 of \cite{zhangwang2019}. Condition (C12), similar to Condition (C.9) of \cite{zhuetal2019} and Condition 3 of \cite{zhangetal2018}, places additional restrictions on the growth rate of the number of scalar predictors and the number of FPC scores.

\begin{theorem}\label{th2}
	Under Conditions (C1)--(C12), we have that
	\begin{equation}\label{B1}
		\frac{L_n(\wh{\bomega})}{\inf_{\bomega\in\mathcal{H}_n} L_n(\bomega)}\to 1
	\end{equation}
	in probability as $n\to \infty$.
\end{theorem}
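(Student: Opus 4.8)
The plan is to show that replacing $\bOmega$ by $\wh{\bOmega}$ in the criterion does not destroy the asymptotic optimality already established in Theorem \ref{th1}. Building on Theorem \ref{th1}, it suffices to prove that
\[
\sup_{\bomega\in\mathcal{H}_n}\frac{\big|\wh{C}_n(\bomega)|_{\bOmega=\wh{\bOmega}} - \wh{C}_n(\bomega)|_{\bOmega=\bOmega}\big|}{R_n(\bomega)} = o_p(1),
\]
together with the analogous statement comparing the infeasible criterion built from $\wh{\bP}_{(m)}$ to the one built from $\bP_{(m)}$; then a standard argument (as in \cite{hansen2007, wanetal2010}) forces $L_n(\wh{\bomega})/\inf_{\bomega}L_n(\bomega)\to 1$. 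The difference in the two criteria is $2\,tr\big(\wh{\bP}(\bomega)(\wh{\bOmega}-\bOmega)\big) = 2\sum_{m=1}^M\omega_m\,tr\big(\wh{\bP}_{(m)}(\wh{\bOmega}-\bOmega)\big)$, and following \cite{liuokui2013} as the authors indicate, this is $2\sum_{i=1}^n(\hat\epsilon_{(M^*),i}^2-\sigma_i^2)\rho_{ii}(\bomega)$. So the core estimate is to bound $\max_m\big|\sum_{i=1}^n(\hat\epsilon_{(M^*),i}^2-\sigma_i^2)\rho_{ii}^{(m)}\big| = o_p(\eta_n)$, since $R_n(\bomega)\geq\eta_n$ on the simplex.

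The key steps, in order, are: (i) decompose $\hat\epsilon_{(M^*),i}^2-\sigma_i^2$ into a term quadratic in $\varepsilon$, a cross term, a squared-bias/approximation term from using model $M^*$, and an additional error term coming from the fact that $\wh{\bP}_{(M^*)}$ is built from estimated FPC scores $\wh{\bxi}$ rather than the true $\bxi$; (ii) use Condition (C10) to replace $\max_i\rho_{ii}^{(m)}$ by $cn^{-1}|tr(\wh{\bP}_{(m)})|$, and bound $|tr(\wh{\bP}_{(m)})| = |tr(\overline{\bP}_{(m)}(\bI-\wh{\bK}_{(m)})) + tr(\wh{\bK}_{(m)})| \leq \wt{p} + O(1)\cdot|tr(\wh{\bK}_{(m)})| \lesssim \wt{p} + h^{-\wt{q}}$ using (C4) and (C11) (after controlling $tr(\wh{\bK}_{(m)}) - tr(\bK_{(m)})$); (iii) bound the stochastic pieces of $\sum_i(\hat\epsilon^2-\sigma^2)$ using the moment condition (C5) together with Chebyshev/Markov and a union bound over the $M$ candidate models, exactly as in the proof of Theorem \ref{th1}, to get a rate $o_p(n)$ or better; (iv) bound the FPCA-induced discrepancy using the estimates on $\lambda_{\max}(\bP_{(m)}-\wh{\bP}_{(m)})$ and $\|\wh{\bxi}-\bxi\|$ that were already invoked for Theorem \ref{th1} under (C7)--(C8), plus (C1)--(C2) for the consistency of $\wh\lambda_k,\wh\psi_k$; (v) assemble the pieces: the product of the $\rho_{ii}$-bound $n^{-1}(\wt{p}+h^{-\wt{q}})$ with the $\sum_i(\hat\epsilon^2-\sigma^2)$-bound, divided by $\eta_n$, is $o_p(1)$ precisely because of Condition (C12), which supplies $\eta_n^{-1}\wt{p}=o_p(1)$ and $\eta_n^{-1}h^{-\wt{q}}=o_p(1)$, the cross terms being controlled by Cauchy--Schwarz against $R_n(\bomega)$ and the Theorem \ref{th1} machinery.

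I expect the main obstacle to be step (iv): handling the extra error term created because $\wh\epsilon_{(M^*),i}$ is a residual from the \emph{estimated-score} fit $\wh{\mu}_{(M^*)} = \wh{\bP}_{(M^*)}Y$ rather than from $\bP_{(M^*)}Y$. One must show $\|\wh{\mu}_{(M^*)} - \mu_{(M^*)}^{\text{true-score}}\|^2$ is negligible relative to $\eta_n$ uniformly, which requires feeding the bound on $\lambda_{\max}(\bP_{(M^*)}-\wh{\bP}_{(M^*)})$ (valid under (C7)--(C8) for the largest model, noting $p_{M^*}\leq\wt p$, $q_{M^*}\leq\wt q$) through $\|Y\| = O_p(\sqrt n)$ via (C9). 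A secondary technical point is that (C10) is stated in terms of $\rho_{ii}^{(m)}$ and $tr(\wh{\bP}_{(m)})$ directly, so one must be careful that the trace bound via (C11) is on $\bK_{(m)}$ (true scores) and transfer it to $\wh{\bK}_{(m)}$; this again reduces to the smoother-matrix perturbation estimate from the proof of Theorem \ref{th1}. Once these perturbation bounds are in hand, the remainder is the same Mallows-averaging argument as Hansen (2007), and the conclusion \eqref{B1} follows by combining with Theorem \ref{th1}.
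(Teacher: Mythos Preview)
Your proposal is correct and follows essentially the same route as the paper: reduce to showing $\sup_{\bomega}|tr(\wh{\bP}(\bomega)(\wh{\bOmega}-\bOmega))|/R_n(\bomega)=o_p(1)$, expand $\hat\epsilon_{(M^*)}=(\bI-\wh{\bP}_{(M^*)})(\mu+\varepsilon)$ into bias/cross/variance pieces, bound $\max_i|\rho_{ii}^{(m)}|$ via (C10)--(C11) as $O_p(n^{-1}(\wt p+h^{-\wt q}))$, and finish with (C12). One simplification relative to your plan: the paper does not isolate an ``FPCA-induced discrepancy'' term in step (iv) at all---it simply uses $\lambda_{\max}(\wh{\bP}_{(M^*)})=O_p(1)$ (Lemma~\ref{lem2}) together with the crude bounds $\|\mu\|^2=O(n)$ and $\|\varepsilon\|^2=O_p(n)$, so each of the four pieces is $O_p(\eta_n^{-1}(\wt p+h^{-\wt q}))$ directly; your anticipated ``main obstacle'' therefore dissolves, and the Chebyshev/union-bound machinery in your step (iii) is likewise unnecessary.
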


Theorem \ref{th2} shows that Theorem \ref{th1} remains valid when $\bOmega$ is replaced by $\wh{\bOmega}$. Thus, the practically feasible $\wh{\bomega}$ also enjoys the asymptotic optimality. The Appendix provides the detailed proof of Theorem \ref{th2}.

\section{Simulation study}\label{sec4}

In this section, we compare the finite sample performance of the proposed Mallows-type model averaging (MMA) estimator with several model selection and averaging estimators based on information criteria.

The data are generated from the following PLFS model,
\begin{equation}\label{mod:1}
	Y_i = \mu_i + \varepsilon_i = \sum_{j=1}^{M_0}\theta_j Z_{ij} + \bff(\bxi_i) + \varepsilon_i, \: i = 1, \ldots, n,
\end{equation}
where $\bxi_i$ is transformed FPC scores vector from $\bzeta_i$ with $\zeta_{ik}$ being generated independently from $N(0, \lambda_k)$ and standard Gaussian CDF being the transformation $\Phi(\cdot)$. The following scenarios are considered.

\noindent\textbf{Design 1.} $M_0 = 50$ and $\theta_j = j^{-2/3}$. $\bZ_i\sim MN(0, \Sigma)$, generated independently from the functional predictor $X_i(t)$, with the $a,b$-th element $\Sigma_{ab}$ being $0.5^{|a-b|}$. $X_i(t)$ is obtained by
\[
X_i(t) = \sum_{k=1}^{40} \zeta_{ik}\psi_k(t),\quad t\in [0,1],
\]
where $\zeta_{ik}\sim N(0, k^{-3/2})$, $\psi_k(t) = \sqrt{2}\sin(k\pi t)$, $k = 1,\ldots,40$. $\varepsilon_i$'s are homoscedastic as $\varepsilon_i\sim N(0, \eta^2)$. Varying $\eta$ such that $R^2 = var(\mu_i)/ var(Y_i)$ varies between 0.1 and 0.9, where $var(\mu_i)$ and $var(Y_i)$ are variances of $\mu_i$ and $Y_i$ respectively. And
\[
\bff(\bxi) = \exp\Big\{\sum_{k=1}^{40}\xi_{k}/k\Big\}.
\]

\begin{figure}[htbp]
	\centering
	\includegraphics[width=0.8\textwidth]{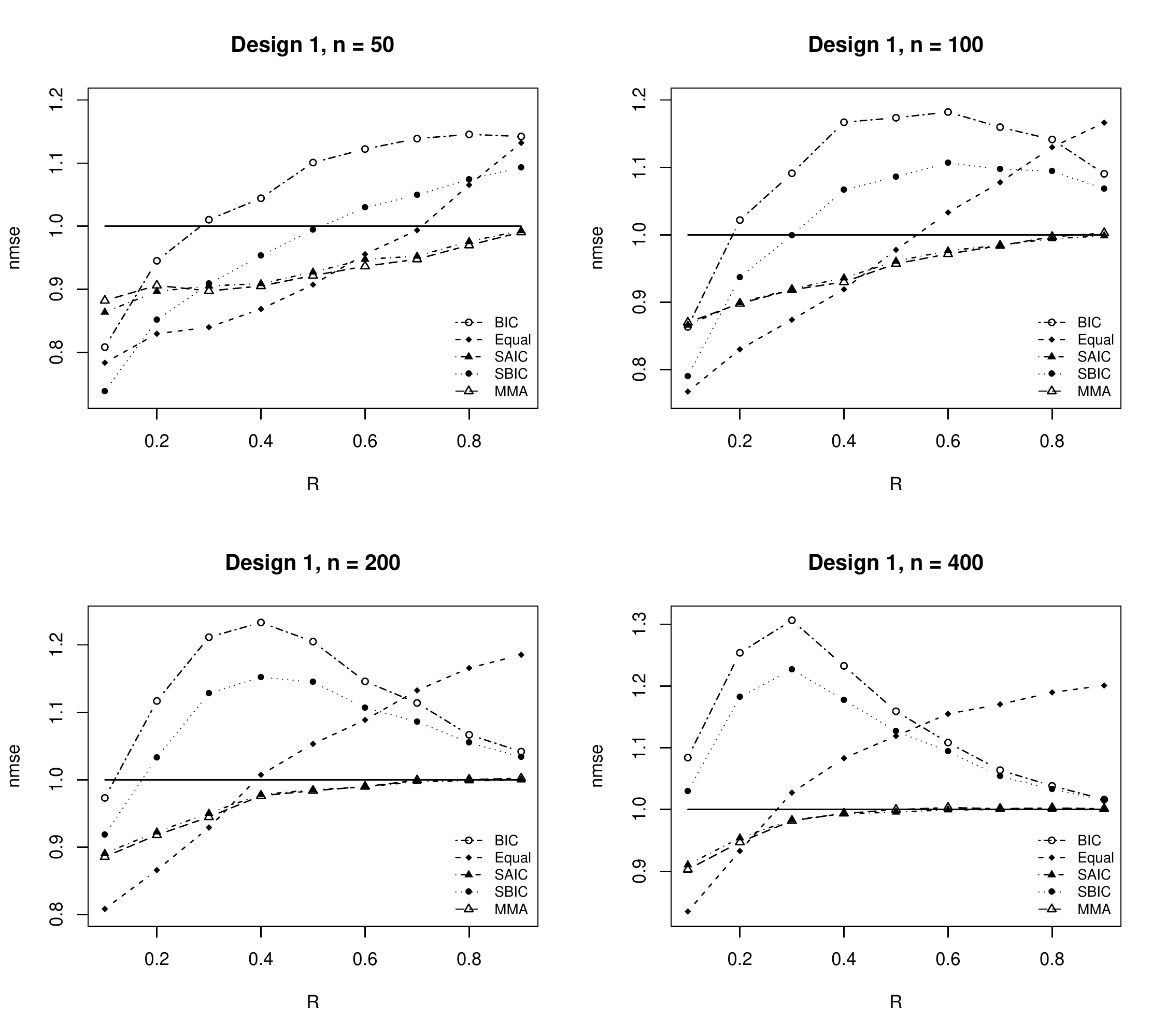}
	\caption{ Normalized mean squared error (NMSE) comparison for Design 1. }
	\label{nmse1}
\end{figure}

\noindent\textbf{Design 2.} $M_0 = 50$ and $\theta_j = j^{-1/2}$. Consider $\bZ$ and $X(t)$ being correlated. Simulate $(\bZ_i, \zeta_{i1})\sim MN(0, \Sigma)$ where the $a,b$-th element $\Sigma_{ab}=0.5^{|a-b|}$. The functional predictor $X_i(t)$ is obtained by
\[
X_i(t) = \sum_{k=1}^{20} \zeta_{ik}\psi_k(t),\quad t\in [0,10],
\]
where $\zeta_{ik}\sim N(0, k^{-2})$, $k = 2,\ldots,20$. $\psi_k(t) = \cos(k\pi t/5)/\sqrt{5}$, $k = 1,\ldots,20$. $\varepsilon_i$'s are heteroscedastic as $\varepsilon_i\sim N\big(0, \eta^2(u_{i}^2+0.01)\big)$, where $u_i$ is $U[-1,1]$. Still varying $\eta$ such that $R^2$ varies between 0.1 and 0.9. And
\[
\bff(\bxi) = \xi_1\xi_2 + \xi_3^2 + \sum_{k=4}^{20} \frac{1}{k}\Big(\xi_k-\frac{1}{2}\Big).
\]

\noindent\textbf{Design 3.} Design 3 is close to Design 1 except that $\bZ_i$ is correlated to $X_i(t)$ as Design 2, and random errors are heteroscedastic as $\varepsilon_i\sim N\big(0, \eta^2(Z_{i1}^2+0.01)\big)$. Still varying $\eta$ such that $R^2$ varies between 0.1 and 0.9.

For each design, $X(t)$ is observed at 100 equally-spaced grids on $\mathcal{T}$ with measurement errors. Denote the $i$-th observation of $X$ at $t_{j}$ by $X_{ij} = X_i(t_j) + e_{ij}$, where measurement errors $e_{ij}$'s are independent $N(0,0.2)$ variables.
The sample size is set to $n=50$, $100$, $200$ and $400$. Consider three kinds of candidate model setting corresponding to each design as follows.

\noindent \textbf{M15a}. For Design 1, a nested setting is considered, that is, containing the first $s$ components of $\bZ$ and $\bxi$. A candidate model contains at least one of $\{Z_1,\ldots,Z_5\}$ and at least one of $\{\xi_1,\xi_2,\xi_3\}$, which leads to $M = 5\cdot 3 = 15$ candidates.

\noindent \textbf{M15b}. For Design 2, we only restrict the nested mode in nonparametric part. The parametric part contains at least one of $\{Z_1,Z_2\}$. For nonparametric part, the first $s$ transformed FPC scores of $\{\xi_1,\ldots,\xi_5\}$ are contained. It results in $ M = \left[\binom{2}{2}+\binom{2}{1}\right]\cdot 5 = 15
$ candidates.

\noindent \textbf{M21}. For Design 3, assume at least one of $\{Z_1,Z_2,Z_3\}$ and at least one of $\{\xi_1,\xi_2\}$ are included in a candidate model. Thus, there are $M = \left[\binom{3}{3}+\binom{3}{2}+\binom{3}{1}\right]\cdot \left[\binom{2}{2}+\binom{2}{1}\right] = 21$ candidates.

\begin{figure}[htbp]
	\centering
	\includegraphics[width=0.8\textwidth]{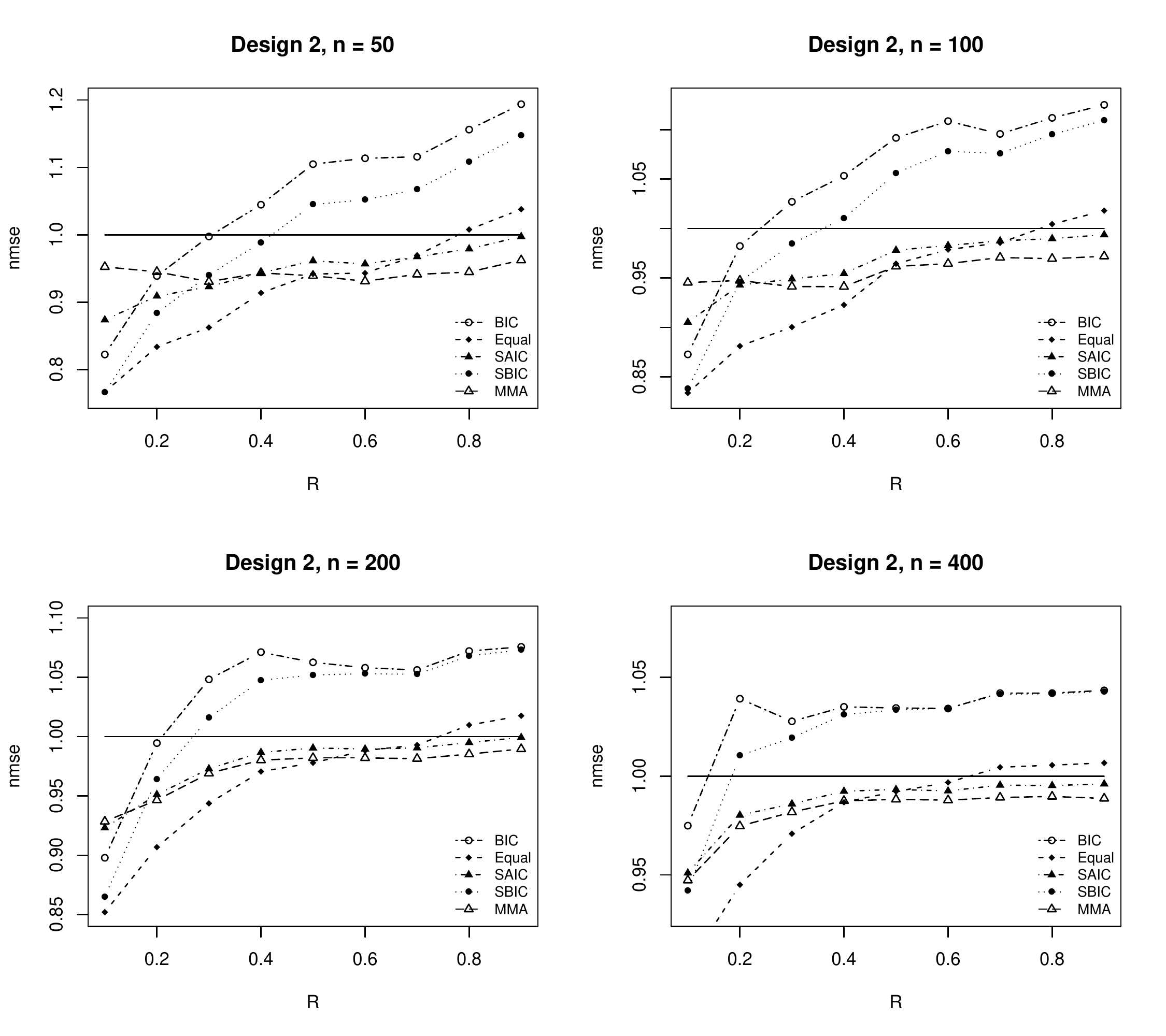}
	\caption{ Normalized mean squared error (NMSE) comparison for Design 2. }
	\label{nmse2}
\end{figure}

The construction of \textbf{M21} is based on the observation that the effects of FPCs on the response does not necessarily coincide with their magnitudes \cite{zhuetal2007,hadiling1998, bairetal2006, zhuetal2014}. Therefore, the structure regarding FPC scores in \textbf{M21} is not restricted to the nested one.

We use Epanechnikov kernel $k(u) = \frac{3}{4}(1-u^2)I(|u|\leq 1)$ for all candidate models with bandwidth $h_m$ being $n^{-1/(1+q_m)}$ based on rule-of-thumb method, $m=1,\ldots,M$. In addition, we compare the finite sample performance of MMA estimator with five alternative methods - AIC, BIC, equally weighting, smoothed AIC (SAIC) and smoothed BIC (SBIC) suggested by Buckland et~al. (1997) \cite{bucklandetal1997}. For the $m$-th candidate model,
AIC and BIC select the model with the smallest scores, defined as $AIC_m = \log(\wh{\sigma}_m^2) + 2tr(\wh{\bP}_{(m)})/n$ and $BIC_m = \log(\wh{\sigma}_m^2) + \log(n)tr(\wh{\bP}_{(m)})/n$, where $\wh{\sigma}_m^2 = \frac{1}{n}\|Y-\wh{\mu}_{(m)}\|^2$.
SAIC and SBIC assign weights to the $m$-th candidate as $\omega_m^{AIC} = \exp(-AIC_m/2) / \sum_{m=1}^M \exp(-AIC_m/2)$ and $\omega_m^{BIC} = \exp(-BIC_m/2) / \sum_{m=1}^M \exp(-BIC_m/2)$, respectively. Equally weighting just assigns equal weights to each candidates. Mean squared error (MSE) of each methods is compared,
\[
MSE = \frac{1}{nD}\sum_{d=1}^{D}\|\wh{\mu}^{(d)}-\mu^{(d)}\|^2,
\]
where $D = 200$ denotes the number of repetitions and $d$ represents the $d$-th trial. For easy comparison, all MSE's are normalized by dividing by the MSE of AIC model selection estimator. Thus, a normalized MSE (NMSE) smaller than 1 indicates the corresponding estimator is superior to AIC estimator, and vice versa.

\begin{figure}[htbp]
	\centering
	\includegraphics[width=0.8\textwidth]{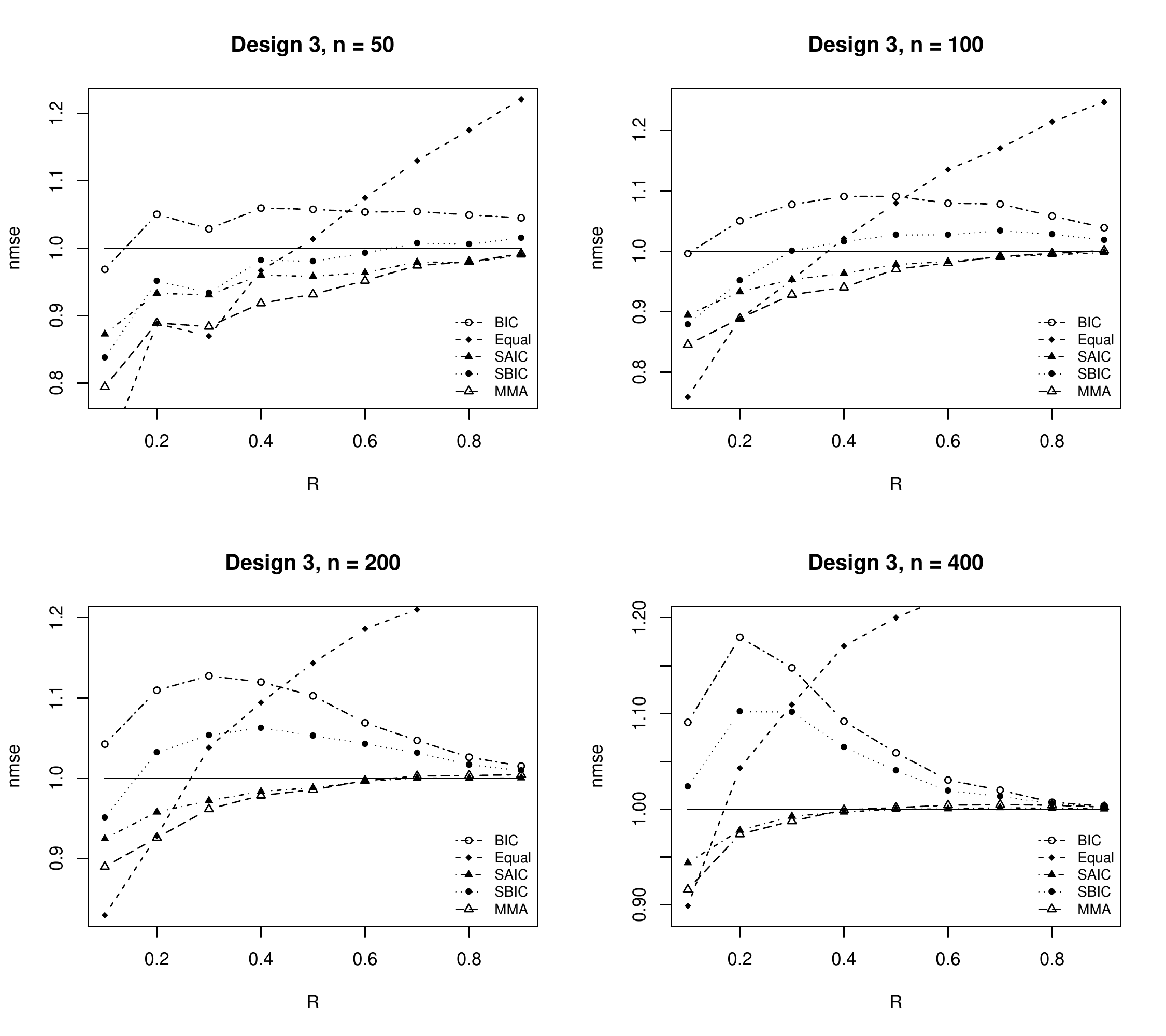}
	\caption{ Normalized mean squared error (NMSE) comparison for Design 3. }
	\label{nmse3}
\end{figure}

Figures~\ref{nmse1}--\ref{nmse3} present corresponding results of Design 1--3.
For Design 1, MMA and SAIC in Figure~\ref{nmse1} exhibits superiorities for large and medium $R^2$ values, whereas for small $R^2$ value, equally weighting performs the best. BIC and SBIC cannot provide comparable results in Design 1. Besides, MMA performs slightly better than SAIC for small sample size or small $R^2$ values. Also, it is shown in Figure~\ref{nmse1} that SAIC and SBIC outperform their model selection counterparts --- AIC and BIC, where the differences decrease as $R^2$ or $n$ grows.

For Design 2, as shown in Figure~\ref{nmse2}, MMA dominates the other methods for large and medium $R^2$ values. Similar to the results for Design 1, we can also observe that BIC, SBIC and equally weighting have a marginal advantage for small $R^2$ values. As $R^2$ increases, the differences between these six methods decrease.

For Design 3, Figure~\ref{nmse3} illustrates that MMA shows an edge over AIC, BIC, SAIC and SBIC for small and medium $R^2$ values, and the differences between these five methods become smaller as $R^2$ grows. Equllay weighting outperforms the other methods for small $R^2$ values but deteriorates rapidly as $R^2$ grows. MMA, AIC and SAIC behave more closely as $n$ grows.

In summary, the proposed MMA estimator delivers more satisfactory outcomes than the other competing estimators in most cases. The superior performance of MMA estimator in finite sample is partly attributed to the merit that its optimality does not depend on the correct specification of candidate models, which means that the true model is not necessarily included in the candidate set.
Moreover, equally weighting method performs well for small and $R^2$ values and SBIC usually yields rather competitive results when $R^2$ is small, whereas their performances generally worsen as $R^2$ increases. This shows that equally weighting and SBIC methods are not capable in selecting optimal weights with minor noise level in our simulation settings.
In addition, model averaging estimators, SAIC and SBIC, always outperform their model selection counterparts --- AIC and BIC, and the differences between AIC and SAIC, or BIC and SBIC generally decrease as $R^2$ increases. SAIC typically shows a moderate advantage over AIC in most cases.
Furthermore, it is observed that MMA has a growing edge over other methods when the structure of candidate model becomes unrestricted. This directly reveals that the optimality of MMA in finite sample relies on candidate models on hand. Therefore, combining various types of candidate model is appropriate practice.


\section{Application to real data}\label{sec5}

In this section, we illustrate the application of the proposed method to two data sets, both consisting of near-infrared (NIR) spectra data and some reference variables.

\subsection{NIR shootout 2002 data set}
The {\it NIR shootout 2002} data set was published by the International Diffuse Reflectance Conference (IDRC) in 2002 and is available from Eigenvetor Research Inc, USA\footnote{Conference International Diffure Reflectance. (2002). NIR Spectra of Pharmaceutical Tablets from Shootout. {\it Eigenvector Research}. http://
	www.eigenvector.com/data/tablets/index.html}. It contains NIR spectra for 655 pharmaceutical tablets (functional predictor $X$), measured at two spectrometers over the spectral region from 600 to 1898 nm with 2 nm increments on the wavelength scale. Some quantities for reference analysis, such as weight of the active ingredient (response variable $Y$), weight of each tablet (scalar predictor $Z_1$), and hardness of each tablet (scalar predictor $Z_2$), are also provided. The data have already been divided into training (155), validation (40) and test (460) subsets. Here, the spectra from instrument 1 were used. And the sample data of $Y$, $Z_1$, and $Z_2$ were standarized for simplicity. We trained PLFS models on training subset and evaluated the performances on test subset.
Mean squared prediction error (MSPE) was used to compare the predictive efficiency.
\[
MSPE = \frac{1}{n_{test}}\sum_{i=1}^{n_{test}}\big(Y_{i} - \wh{\mu}_{i}\big)^2,
\]
where $n_{test}$ is the size of test set.
As for candidate models, we considered $\{Z_1, Z_2\}$ for parametric components, and $\{\xi_1,\xi_2\}$, $\{\xi_1,\xi_2,\xi_3\}$, $\{\xi_1,\ldots,\xi_4\}$ as three kinds of candidate sets for FPC scores. Every candidate model comprised at least one component of parametric and nonparametric parts, respectively. Therefore, $M = 9, 21, 45$ corresponding to three settings.

\begin{table}
	\renewcommand\arraystretch{1.5}
	\caption{MSPE for {\it NIR shootout 2002} data.}
	\label{tab:shootout}
	\centering
	\begin{tabular}{lrrrrrr}
		\hline
		& $AIC$& $BIC$& $Equal$& $SAIC$& $SBIC$& $MMA$\\
		\hline
		$M=9$ & \textbf{0.7702}& 0.7957& 0.7825& 0.7796& 0.7933&			 0.7762 \\
		$M=21$& 		  0.7301 & 0.7301& 0.7379& 0.7203& 0.7494& \textbf{0.7177}\\
		$M=45$& 		  0.6956 & 0.7267& 0.7349& 0.6908& 0.7160& \textbf{0.6906}\\
		\hline
	\end{tabular}
\end{table}

Table~\ref{tab:shootout} presents the MSPE results of different procedures for {\it NIR shootout 2002} data set. We can observe from the table that for this data set, AIC generally yields smaller MSPEs than BIC does. Comparison between SAIC and SBIC also shows a similar pattern. Second, the relatively small MSPEs produced by MMA estimator indicates that MMA performs the best among these four model averaging methods. It is shown that equally weighting yields relatively large MSPEs under all settings, which suggests naive equally weighting precedure suffer severely in practice. Moreover, Table~\ref{tab:shootout} shows that MMA has an advantage over AIC in $M = 21, 45$ settings. Hence, MMA is likely to handle it well when encountered with diversity of candidate models.

\subsection{Equine articular cartilage data set}
This data set\footnote{Sarin, J. K. et~al. (2019). Dataset on equine cartilage near infrared spectra, composition, and functional properties. {\it figshare. Collection.} https://doi.org/10.6084/m9.figshare.c.4423139.v2} contains NIR spectroscopy measurements (functional predictor $X$) within the spectrum region of 700--1050 nm from 869 different locations across the articular surfaces of five equine fetlock joints, paired with comprehensive reference measurements from biomechanics, chemical composition and internal structure of the tissue, such as, instantaneous moduli ($Y$), collagen contents ($Z_1$), proteoglycan contents ($Z_2$), cartilage thickness ($Z_3$), and calcified layer thickness ($Z_4$), etc. More details are available \cite{sarinetal2019}. 
The sample data with size 530 were retained after removing all incomplete records. To evaluate the performance of our proposed procedure, we randomly selected 80\% of records as training set and constructed the test set using remaining records. Furthermore, we standardized sample data of $Z_1,\ldots, Z_4$, and performed successively logarithmic transformation and centralization on data of $Y$ to facilitate computation.
Two nested kinds of candidate models were considered: one included $\{Z_1, Z_2, Z_3\}+\{\xi_1, \ldots, \xi_4\}$, the other comprised
$\{Z_1, \ldots, Z_4\}+\{\xi_1, \ldots, \xi_4\}$. Each candidate model contained respectively at least one component of parametric and nonparametric parts, resulting in $M = 12, 16$. We conducted $D = 200$ runs. For each repetition, we still evaluated MSPE.
\[
MSPE^{(d)} = \frac{1}{n_{test}}\sum_{i=1}^{n_{test}}\big(Y_{i}^{(d)}-\wh{\mu}_{i}^{(d)}\big)^2, \quad d=1,\ldots,D,
\]
where $n_{test}$ is the size of test set, $Y_{i}^{(d)}$ is the $i$-th response of the $d$-th test set, and $\wh{\mu}_{i}^{(d)}$ is the prediction for $Y_{i}^{(d)}$. The average MSPEs with their standard error across $D$ repetitions were compared.

\begin{table}
	\renewcommand\arraystretch{1.5}
	\caption{The average MSPE with standard error on test set for Equine articular cartilage data.}
	\label{tab:equine}
	\centering
	\begin{tabular}{lrrrrrr}
		\hline
		& $AIC$& $BIC$& $Equal$& $SAIC$& $SBIC$& $MMA$\\
		\hline
		$M=12$	& 0.37908& 0.37908& 0.44997& 0.37908& 0.37908& \textbf{0.37856}\\
		& (0.00057)& (0.00057)& (0.00050)& (0.00057)& (0.00058)& (0.00055)\\
		$M=16$	& 0.63542& 0.63694& 0.69663& 0.63516& 0.63565& \textbf{0.63494}\\
		& (0.00001)& (0.00002)& (0.00000)& (0.00000)& (0.00002)& (0.00000)\\
		\hline
	\end{tabular}
\end{table}

Table~\ref{tab:equine} illustrates the average MSPEs with their standard errors across replications on test set.
First, it can be seen that MMA delivers the smallest results in terms of MSPE, which demonstrates the superiority of MMA to other model averaging and selection estimators and verifies the better prediction accuracy of MMA.
SAIC performs the second with larger MSPEs, slightly inferior to MMA. And AIC shares no smaller  average MSPEs than SAIC and MMA.
Second, BIC and SBIC do not perform well on this data set. The average MSPEs of BIC and SBIC are of larger scales compared to other methods.
Equally weighting method performs the worst with much larger mean MSPEs, which again emphasizes prudent use of it in practice.
The average MSPEs of model averaging estimators are always smaller than that of their model selection counterpart, which indicates that model averaging is a satisfactory alternative to model selection when prediction effect is of primary interest.
To sum up, these results show that the proposed MMA procedure is able to effectively deliver competitive outcomes.

\section{Conclusion and discussion}\label{sec6}

We presented a Mallows-type model averaging approach for PLFS model in which a scalar response depends both on scalar covariates and a functional predictor.
We verified the asymptotic optimality of MMA estimator when the function predictor is densely measured with error.
Additionally, a finite sample simulation was used to demonstrate the performance of the proposed estimator is either superior or comparable to that of classic competing model selection and averaging methods. Also, real data analysis manifested that the proposed estimator generally facilitated modification of the prediction results and reduced the possibility of producing poor outcomes when using a single model.

Many aspects deserve future research. In practice, if lots of variables are available, it would be reasonable to derive a suitable model averaging estimation for high-dimensional regression problems.
Moreover, there is room for studying the situation in which functional data are sparsely or irregularly observed, similar to cases in longitudinal studies. Besides, if more than one functional predictor were to exist, it would be interesting to determine how to effectively and efficiently conduct model averaging.
Finally, our asymptotic optimality is derived on the base of that all candidate models are misspecified. A consistent estimator is more desired if the correct model exists in our candidate set. Therefore, considering a consistent model averaging approach would be an avenue for future research.

\appendix

\section{Appendix}\label{app}

\subsection{Some lemmas}

The estimation error of the transformed FPC score is of order $O_p(n^{-1/2}k)$ as shown in \cite{wongetal2019}, and we list the result here while omitting the detailed proof.
\begin{lemma}\label{lem1}
	Suppose the transformation function $\Phi(\cdot)$ has bounded derivative. Under Conditions (C1)--(C2), there is a constant $C>0$ such that $\mathbb{E}(\wh{\xi}_{ik} - \xi_{ik})^2 \leq Ck^2/n$ uniformly for $k\leq J_n$, where $J_n = \lfloor(2C_{\lambda}O_p(1))^{-1/(1+\alpha)}n^{1/(2+2\alpha)}\rfloor$.
\end{lemma}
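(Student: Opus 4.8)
The plan mirrors the argument of \cite{wongetal2019}: linearize $\Phi$ and then control the perturbation of the normalized score $\lambda_k^{-1/2}\zeta_{ik}$. Since $\Phi'$ is bounded, the mean value theorem gives $|\wh\xi_{ik}-\xi_{ik}|\le\sup_t|\Phi'(t)|\,\big|\wh\lambda_k^{-1/2}\wh\zeta_{ik}-\lambda_k^{-1/2}\zeta_{ik}\big|$, so it suffices to show $\mathbb{E}\big(\wh\lambda_k^{-1/2}\wh\zeta_{ik}-\lambda_k^{-1/2}\zeta_{ik}\big)^2\le Ck^2/n$ uniformly for $k\le J_n$. Adding and subtracting $\lambda_k^{-1/2}\wh\zeta_{ik}$ and unfolding the definitions of $\wh\zeta_{ik}$ and $\zeta_{ik}$ gives
\[
\wh\lambda_k^{-1/2}\wh\zeta_{ik}-\lambda_k^{-1/2}\zeta_{ik}=\lambda_k^{-1/2}\!\int_{\mathcal{T}}(X_i-\nu)(\wh\psi_k-\psi_k)\,dt+\lambda_k^{-1/2}\!\int_{\mathcal{T}}(\wt X_i-X_i-\wh\nu+\nu)\wh\psi_k\,dt+\big(\wh\lambda_k^{-1/2}-\lambda_k^{-1/2}\big)\wh\zeta_{ik},
\]
and I would bound the $L^2$ norm of the three terms separately, the first being the dominant one.

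The eigenvalue term is lower order: Weyl's inequality gives $|\wh\lambda_k-\lambda_k|\le\|\wh{\mathcal{C}}-\mathcal{C}\|$, and Condition (C2) yields $\mathbb{E}\big(\langle(\wh{\mathcal{C}}-\mathcal{C})\psi_k,\psi_k\rangle\big)^2=n^{-1}\mathbb{E}(\zeta_{1k}^2-\lambda_k)^2\le C_\zeta\lambda_k^2/n$; combining this with $\lambda_k^{-1/2}-\wh\lambda_k^{-1/2}=(\lambda_k-\wh\lambda_k)\big\{\lambda_k^{1/2}\wh\lambda_k^{1/2}(\lambda_k^{1/2}+\wh\lambda_k^{1/2})\big\}^{-1}$ and $\lambda_k\asymp k^{-\alpha}$ from Condition (C1), and after isolating the $i$-th summand of $\wh{\mathcal{C}}-\mathcal{C}$ so as to decouple it from $\zeta_{ik}$, one finds this term is $O_p(n^{-1/2})$. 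The mean term is handled the same way: $\int_{\mathcal{T}}(\wh\nu-\nu)\wh\psi_k\,dt$ has variance of order $\lambda_k/n$, so after multiplication by $\lambda_k^{-1/2}$ it is $O_p(n^{-1/2})$; and since the trajectories are densely recorded, the recovery error $\|\wt X_i-X_i\|$ is negligible relative to $n^{-1/2}$ (cf.\ \cite{kongetal2016,wongetal2019}). All of these are $o_p(k/\sqrt n)$.

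The heart of the proof is the first term. Condition (C1) forces the spectral gap $\delta_k=\min(\lambda_{k-1}-\lambda_k,\lambda_k-\lambda_{k+1})$ to be of exact order $k^{-1-\alpha}$, so for $k\le J_n$ one has $\|\wh{\mathcal{C}}-\mathcal{C}\|/\delta_k\le 1/2$ with probability tending to one and the first-order perturbation expansion of eigenprojections is valid:
\[
\wh\psi_k-\psi_k=\sum_{j\ne k}\frac{\langle(\wh{\mathcal{C}}-\mathcal{C})\psi_k,\psi_j\rangle}{\lambda_k-\lambda_j}\,\psi_j+R_k,\qquad\|R_k\|=O_p\big(\|\wh{\mathcal{C}}-\mathcal{C}\|^2\delta_k^{-2}\big).
\]
Projecting onto $X_i-\nu$ and using that the eigenfunctions diagonalize $\mathcal{C}$ --- so that for $j\ne k$, $\langle(\wh{\mathcal{C}}-\mathcal{C})\psi_k,\psi_j\rangle=n^{-1}\sum_{i'}\{\zeta_{i'k}\zeta_{i'j}-\mathbb{E}(\zeta_{i'k}\zeta_{i'j})\}$ up to negligible centering and recovery corrections --- the leading contribution is $\sum_{j\ne k}(\lambda_k-\lambda_j)^{-1}\langle(\wh{\mathcal{C}}-\mathcal{C})\psi_k,\psi_j\rangle\zeta_{ij}$. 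I would compute its second moment directly, using $\mathbb{E}(\zeta_{ij}\zeta_{il})=\lambda_j$ when $j=l$ and $0$ otherwise, together with the bound $\mathbb{E}(\zeta_{1k}^2\zeta_{1j}^2)\le C_\xi\lambda_k\lambda_j$ of Condition (C2); this reduces matters to estimating $\lambda_k n^{-1}\sum_{j\ne k}\lambda_j^2(\lambda_k-\lambda_j)^{-2}$, and splitting the sum according to whether $j$ is comparable to $k$ or far from it, together with the decay and gap bounds of (C1), yields $\sum_{j\ne k}\lambda_j^2(\lambda_k-\lambda_j)^{-2}=O(k^2)$. Hence the leading term has $L^2$ norm $O(\lambda_k^{1/2}k/\sqrt n)$, and multiplying by $\lambda_k^{-1/2}$ gives the asserted $O(k/\sqrt n)$; a cruder term-by-term bound would cost only an extra $\sqrt{\log k}$, which is still admissible.

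I expect the main obstacle to be twofold. First, the remainder: $\|R_k\|=O_p(\|\wh{\mathcal{C}}-\mathcal{C}\|^2\delta_k^{-2})=O_p(n^{-1}k^{2+2\alpha})$, and after projecting onto $X_i-\nu$ and scaling by $\lambda_k^{-1/2}$ one must still verify that this, and the analogous higher-order resolvent terms, are $o_p(k/\sqrt n)$; this is exactly where the restriction $k\le J_n=O(n^{1/(2+2\alpha)})$ enters, since together with $\alpha>1$ it forces $k=o(n^{1/2})$, and it has to be checked uniformly in $k$ via a high-probability bound on $\|\wh{\mathcal{C}}-\mathcal{C}\|$ obtained from the fourth-moment part of Condition (C2). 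Second, the dependence between the empirical objects $\wh{\mathcal{C}},\wh\nu,\wh\lambda_k,\wh\psi_k$ and the $i$-th observation: in every term the summand indexed by $i$ must be peeled off before independence is invoked, and the resulting self-terms (of the form $n^{-1}\zeta_{ik}\zeta_{ij}^2$ and the like) require fourth-order --- and, for the cleanest constants, sixth-order via Condition (C5) --- moments of the scores, all of which are available. Once these two points are settled, the remaining bookkeeping (the $\sum_j\lambda_j^2(\lambda_k-\lambda_j)^{-2}$ estimate and the negligible centering and recovery corrections) is routine.
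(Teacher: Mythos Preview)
Your proposal is correct and is in fact considerably more detailed than the paper's own treatment: the paper does not prove Lemma~\ref{lem1} at all but simply states that the result is shown in \cite{wongetal2019} and omits the proof. Your sketch follows exactly the route of that reference --- linearize via the bounded derivative of $\Phi$, decompose the normalized-score error into the eigenfunction, eigenvalue, and mean/recovery contributions, and control the dominant eigenfunction term by the first-order resolvent expansion together with the gap condition in (C1) and the moment bounds in (C2) --- so there is nothing to compare beyond noting that you have supplied the argument the paper chose to suppress.
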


\begin{lemma}\label{lem2}
	Under Conditions (C1)--(C4) and (C8), we have $\lambda_{\max}(\bK_{(m)}) = O(1)$, $\lambda_{\max}(\wh{\bK}_{(m)}) = O_p(1)$, and $\lambda_{\max}(\bP_{(m)}) = O(1)$, $\lambda_{\max}(\wh{\bP}_{(m)}) = O_p(1)$, for $m=1,\ldots, M$.
\end{lemma}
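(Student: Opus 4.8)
Every one of the four quantities is the spectral norm of an $n\times n$ matrix, so I would control them through the elementary bound $\lambda_{\max}(\bA)=\|\bA\|_2\le(\|\bA\|_1\|\bA\|_\infty)^{1/2}$, where $\|\bA\|_1$ is the maximum absolute column sum and $\|\bA\|_\infty$ the maximum absolute row sum, supplemented by two structural observations: the smoother matrices are (nonnegative and) row-stochastic, and the ``projection blocks'' $\wt\bP_{(m)}$, $\overline\bP_{(m)}$ are symmetric idempotents, hence have spectral norm at most $1$. The bounds for $\bP_{(m)},\wh\bP_{(m)}$ will then follow from those for $\bK_{(m)},\wh\bK_{(m)}$ by submultiplicativity and the triangle inequality applied to the decompositions $\bP_{(m)}=\wt\bP_{(m)}(\bI-\bK_{(m)})+\bK_{(m)}$ and $\wh\bP_{(m)}=\overline\bP_{(m)}(\bI-\wh\bK_{(m)})+\wh\bK_{(m)}$.

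\textbf{The true smoother $\bK_{(m)}$.} By Condition (C3) the kernel is a density, so $K_{(m),ij}\ge 0$ and each row of $\bK_{(m)}$ sums to one by the Nadaraya--Watson normalisation; hence $\|\bK_{(m)}\|_\infty=1$. Condition (C4) gives $\|\bK_{(m)}\|_1=\max_j\sum_i K_{(m),ij}=O(1)$ uniformly in $m$ (a.s.). Therefore $\lambda_{\max}(\bK_{(m)})\le(\|\bK_{(m)}\|_1\|\bK_{(m)}\|_\infty)^{1/2}=O(1)$ uniformly in $m$. Then, since $\lambda_{\max}(\wt\bP_{(m)})\le 1$,
\[
\lambda_{\max}(\bP_{(m)})\le\lambda_{\max}(\wt\bP_{(m)})\big(1+\lambda_{\max}(\bK_{(m)})\big)+\lambda_{\max}(\bK_{(m)})\le 1+2\lambda_{\max}(\bK_{(m)})=O(1),
\]
uniformly in $m$.

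\textbf{The estimated smoother $\wh\bK_{(m)}$.} The rows of $\wh\bK_{(m)}$ still sum to one, so $\|\wh\bK_{(m)}\|_\infty=1$; it remains to show $\|\wh\bK_{(m)}\|_1=\max_j\sum_i\wh K_{(m),ij}=O_p(1)$ uniformly in $m$. I would obtain this by transferring the column-sum bound of the previous step from $\bK_{(m)}$ to $\wh\bK_{(m)}$: Lemma~\ref{lem1} bounds the score estimation error, $\mathbb E(\wh\xi_{ik}-\xi_{ik})^2\le Ck^2/n$ for $k\le\wt q$, and Condition (C8) makes $\wt q$ of the order of $J_n$, so $\|\wh\bxi_{(m),i}-\bxi_{(m),i}\|=o_p(1)$; combined with the compact support and Lipschitz continuity of $k(\cdot)$ from Condition (C3) and the bandwidth rate, each kernel evaluation and each normalising denominator is perturbed only negligibly, whence $\lambda_{\max}(\wh\bK_{(m)}-\bK_{(m)})=o_p(1)$ and $\lambda_{\max}(\wh\bK_{(m)})\le\lambda_{\max}(\bK_{(m)})+o_p(1)=O_p(1)$. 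The same triangle-inequality computation as above, now with $\overline\bP_{(m)}$ idempotent, gives $\lambda_{\max}(\wh\bP_{(m)})\le 1+2\lambda_{\max}(\wh\bK_{(m)})=O_p(1)$.

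\textbf{Main obstacle.} The computations for $\bK_{(m)},\bP_{(m)}$ and the passage from the smoother bounds to the projection bounds are routine. The real work is the perturbation step for $\wh\bK_{(m)}$: one must make the comparison with $\bK_{(m)}$ uniform over the \emph{diverging} index set $m\le M$ (and over $i,j\le n$) while controlling the ratio structure, in particular preventing the random denominators $\sum_{j'}\mathcal K_{h_m}(\wh\bxi_{(m),i}-\wh\bxi_{(m),j'})$ from being too small. A crude union bound built only on the second-moment rate of Lemma~\ref{lem1} is not enough, so I would instead exploit the higher-moment conditions (C2) and (C5) to upgrade the score-error control, use Condition (C7) to keep $\bK_{(m)}$ (and hence the denominators) away from degeneracy, and then invoke the Lipschitz-plus-compact-support structure of the kernel to bound $\lambda_{\max}(\wh\bK_{(m)}-\bK_{(m)})$ uniformly.
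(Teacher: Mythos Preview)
Your proposal is correct and follows essentially the same route as the paper: the Riesz bound $\lambda_{\max}(\bA)\le(\|\bA\|_1\|\bA\|_\infty)^{1/2}$ together with Condition~(C4) handles $\bK_{(m)}$, idempotency of $\wt\bP_{(m)},\overline\bP_{(m)}$ plus submultiplicativity handles the passage to $\bP_{(m)},\wh\bP_{(m)}$, and Lemma~\ref{lem1} with kernel smoothness controls $\wh\bK_{(m)}-\bK_{(m)}$. The only notable difference is in the perturbation step: rather than worrying about denominator degeneracy or invoking (C5), (C7), the paper simply Taylor-expands each entry to obtain $\wh K_{(m),ij}=K_{(m),ij}+O_p(n^{-3/2}q_m)$, sums to get $\max_i\sum_j|\wh K_{(m),ij}-K_{(m),ij}|=O_p(n^{-1/2}q_m)$, and thereby records the explicit rate $\lambda_{\max}(\wh\bK_{(m)}-\bK_{(m)})=O_p(n^{-1/2}q_m)$, which is sharper than your $o_p(1)$ and is exactly what Lemma~\ref{lem3} needs downstream.
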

\begin{proof}
	For any square matrices $\bM_1$ and $\bM_2$ (see \cite{li1987}), we have
	\begin{equation}\label{lamleq}
		\begin{aligned}
			\lambda_{\max}(\bM_1\bM_2) &\leq \lambda_{\max}(\bM_1)\lambda_{\max}(\bM_2) \\
			\mbox{and}\: \lambda_{\max}(\bM_1 + \bM_2) &\leq \lambda_{\max}(\bM_1) + \lambda_{\max}(\bM_2).
		\end{aligned}
	\end{equation}
	These two inequalities will be frequently used in the following proofs.
	
	By an inequality of Reisz (see \cite{speckman1988}), we obtain that
	\[
	\lambda_{\max}^2(\bK_{(m)})\leq \max_i \sum_{j=1}^n |\bK_{(m),ij}|\cdot \max_j \sum_{i=1}^n |\bK_{(m),ij}|,
	\]
	which implies that $\lambda_{\max}(\bK_{(m)})=O(1)$. Hence,
	\[
	\begin{aligned}
		\lambda_{\max}(\bP_{(m)}) &= \lambda_{\max}(\wt{\bP}_{(m)})\big(1+\lambda_{\max}(\bK_{(m)})\big) + \lambda_{\max}(\bK_{(m)}) \\
		&= \big(1+\lambda_{\max}(\bK_{(m)})\big) + \lambda_{\max}(\bK_{(m)}) = O(1).
	\end{aligned}
	\]
	
	From Lemma \ref{lem1}, we obtain that
	\[
	\begin{aligned}
		\wh{\xi}_{ik} - \xi_{ik} &= O_p(n^{-1/2}k), \\
		\wh{\xi}_{ik} - \wh{\xi}_{jk} &= \xi_{ik} - \xi_{jk} + O_p(n^{-1/2}k), \quad k\leq J_n.
	\end{aligned}
	\]
	Applying Taylor series expansion and condition (C4),
	\[
	\begin{aligned}
		& \wh{\bK}_{(m),ij} = \frac{\mathcal{K}(\wh{\bxi}_{(m),i} - \wh{\bxi}_{(m),j})}{\sum_{j'=1}^{n}\mathcal{K}(\wh{\bxi}_{(m),i} - \wh{\bxi}_{(m),j'})} \\
		& = \Big\{ \mathcal{K}(\bxi_{(m),i} - \bxi_{(m),j}) + \sum_{l=1}^{q_m}k'(\xi_{il}-\xi_{jl})\prod_{m\neq l}k(\xi_{im}-\xi_{jm})(\wh{\xi}_{il}-\xi_{il}+\xi_{jl}-\wh{\xi}_{jl}) \\
		& \quad +o_p(n^{-\frac{1}{2}}q_m)\Big\} / \Big\{\sum_{j'=1}^n [\mathcal{K}(\bxi_{(m),i} - \bxi_{(m),j'}) + O_p(n^{-\frac{3}{2}}q_m)]\Big\} \\
		& =\frac{\mathcal{K}(\bxi_{(m),i} - \bxi_{(m),j})}{\sum_{j'=1}^n\mathcal{K}(\bxi_{(m),i} - \bxi_{(m),j'})} + \Big\{\sum_{l=1}^{q_m}k'(\xi_{il}-\xi_{jl})\prod_{m\neq l}k(\xi_{im}-\xi_{jm})(\wh{\xi}_{il}-\xi_{il}+\xi_{jl}-\wh{\xi}_{jl})\Big\} \\
		& \quad / \Big\{\sum_{j'=1}^n \mathcal{K}(\bxi_{(m),i}-\bxi_{(m),j'})\Big\} + o_p(n^{-\frac{1}{2}}q_m) \\
		& = \bK_{(m),ij} + O_p(n^{-\frac{3}{2}}q_m), \qquad q_m\leq J_n,
	\end{aligned}
	\]
	i.e., $\wh{\bK}_{(m),ij} = \bK_{(m),ij} + O_p(n^{-\frac{3}{2}}q_m)$.
	Note that $q_m$ is no larger than $J_n$ and it is common for kernel smoothing to restrict the dimension of $\bxi$ to handle the curse of dimensionality. By conditions (C4) and (C8), we can show that
	\[
	\begin{aligned}
		\max_i \sum_{j=1}^n |\wh{\bK}_{(m),ij}| &= \max_i \sum_{j=1}^n |\bK_{(m),ij}| + O_p(n^{-\frac{1}{2}}q_m) = O_p(1), \\
		\max_j \sum_{i=1}^n |\wh{\bK}_{(m),ij}| &= \max_j \sum_{i=1}^n |\bK_{(m),ij}| + O_p(n^{-\frac{1}{2}}q_m) = O_p(1),
	\end{aligned}
	\]
	uniformly for $m = 1, \ldots, M$. Similarly,
	\[
	\lambda_{\max}^2(\wh{\bK}_{(m)})\leq \max_i \sum_{j=1}^n |\wh{\bK}_{(m),ij}| \max_j \sum_{i=1}^n |\wh{\bK}_{(m),ij}| = O_p(1),
	\]
	\[
	\begin{aligned}
		\lambda_{\max}(\wh{\bP}_{(m)}) &= \lambda_{\max}(\overline{\bP}_{(m)})\big(1+\lambda_{\max}(\wh{\bK}_{(m)})\big) + \lambda_{\max}(\wh{\bK}_{(m)}) \\
		&= \big(1+\lambda_{\max}(\wh{\bK}_{(m)})\big) + \lambda_{\max}(\wh{\bK}_{(m)}) = O_p(1).
	\end{aligned}
	\]
	In addition,
	\[
	\max_i\sum_{j=1}^n|\wh{\bK}_{(m),ij}-\bK_{(m),ij}|=O_p(n^{-\frac{1}{2}}q_m),
	\]
	
	\[
	\max_j\sum_{i=1}^n|\wh{\bK}_{(m),ij}-\bK_{(m),ij}|=O_p(n^{-\frac{1}{2}}q_m),
	\]
	
	\[
	\lambda_{\max}^2(\wh{\bK}_{(m)}-\bK_{(m)}) \leq \max_i\sum_{j=1}^n|\wh{\bK}_{(m),ij}-\bK_{(m),ij}| \max_j\sum_{i=1}^n|\wh{\bK}_{(m),ij}-\bK_{(m),ij}|,
	\]
	which leads to
	\begin{equation}\label{lamkk}
		\lambda_{\max}(\wh{\bK}_{(m)}-\bK_{(m)}) = O_p(n^{-\frac{1}{2}}q_m).
	\end{equation}
\end{proof}

\begin{lemma}\label{lem3}
	Under Conditions (C1)--(C4) and (C7)--(C8), we have
	\[
	\lambda_{\max}\big(\bP_{(m)}-\wh{\bP}_{(m)}\big) = O_p(n^{-\frac{1}{2}}q_m + n^{-1}q_m^2),
	\]
	for $m=1, \ldots, M$.
\end{lemma}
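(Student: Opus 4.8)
The plan is to separate $\bP_{(m)}-\wh{\bP}_{(m)}$ into a part governed purely by the smoother error $\wh{\bK}_{(m)}-\bK_{(m)}$, which Lemma~\ref{lem2} already controls, and a part coming from the induced perturbation of the partialled-out least-squares projection $\wt{\bP}_{(m)}$. Using $\bP_{(m)}=\wt{\bP}_{(m)}(\bI-\bK_{(m)})+\bK_{(m)}$, $\wh{\bP}_{(m)}=\overline{\bP}_{(m)}(\bI-\wh{\bK}_{(m)})+\wh{\bK}_{(m)}$, and adding and subtracting $\overline{\bP}_{(m)}(\bI-\bK_{(m)})$, I would write
\[
\bP_{(m)}-\wh{\bP}_{(m)}=\big(\wt{\bP}_{(m)}-\overline{\bP}_{(m)}\big)\big(\bI-\bK_{(m)}\big)+\overline{\bP}_{(m)}\big(\wh{\bK}_{(m)}-\bK_{(m)}\big)+\big(\bK_{(m)}-\wh{\bK}_{(m)}\big).
\]
Since $\overline{\bP}_{(m)}$ is an orthogonal projection and $\lambda_{\max}(\bI-\bK_{(m)})=O(1)$ by Lemma~\ref{lem2}, inequality (\ref{lamleq}) together with the bound (\ref{lamkk}) show that the last two terms are $O_p(n^{-1/2}q_m)$ uniformly in $m$, so everything reduces to bounding $\lambda_{\max}(\wt{\bP}_{(m)}-\overline{\bP}_{(m)})$.

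For this difference of two hat matrices, I would set $\bDelta_{(m)}=\wh{\bZ}_{(m)}-\wt{\bZ}_{(m)}=(\bK_{(m)}-\wh{\bK}_{(m)})\bZ_{(m)}$. Under the standard design regularity $\lambda_{\max}(\bZ_{(m)}^T\bZ_{(m)})=O(n)$ a.s., (\ref{lamkk}) gives $\lambda_{\max}(\bDelta_{(m)})=O_p(q_m)$ and $\lambda_{\max}(\wt{\bZ}_{(m)})=O_p(n^{1/2})$. A telescoping expansion around $\wt{\bZ}_{(m)}$, together with the resolvent identity for $(\wh{\bZ}_{(m)}^T\wh{\bZ}_{(m)})^{-1}$, gives
\[
\begin{aligned}
\wt{\bP}_{(m)}-\overline{\bP}_{(m)}={}&-\wt{\bZ}_{(m)}\big(\wt{\bZ}_{(m)}^T\wt{\bZ}_{(m)}\big)^{-1}\bDelta_{(m)}^T-\bDelta_{(m)}\big(\wh{\bZ}_{(m)}^T\wh{\bZ}_{(m)}\big)^{-1}\wh{\bZ}_{(m)}^T\\
&+\wt{\bZ}_{(m)}\big(\wt{\bZ}_{(m)}^T\wt{\bZ}_{(m)}\big)^{-1}\big(\wh{\bZ}_{(m)}^T\wh{\bZ}_{(m)}-\wt{\bZ}_{(m)}^T\wt{\bZ}_{(m)}\big)\big(\wh{\bZ}_{(m)}^T\wh{\bZ}_{(m)}\big)^{-1}\wh{\bZ}_{(m)}^T,
\end{aligned}
\]
where $\wh{\bZ}_{(m)}^T\wh{\bZ}_{(m)}-\wt{\bZ}_{(m)}^T\wt{\bZ}_{(m)}=\wt{\bZ}_{(m)}^T\bDelta_{(m)}+\bDelta_{(m)}^T\wt{\bZ}_{(m)}+\bDelta_{(m)}^T\bDelta_{(m)}$ has largest singular value $O_p(n^{1/2}q_m+q_m^2)$. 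Here Condition~(C7) is what makes the inverse-Gram factors harmless: by keeping $\bK_{(m)}$ — hence $\bI-\bK_{(m)}$ and the partialled-out Gram matrix $\wt{\bZ}_{(m)}^T\wt{\bZ}_{(m)}=\bZ_{(m)}^T(\bI-\bK_{(m)})^T(\bI-\bK_{(m)})\bZ_{(m)}$ — away from being ill-conditioned, it forces $\lambda_{\min}(\wt{\bZ}_{(m)}^T\wt{\bZ}_{(m)})$ to be of exact order $n$ uniformly in $m$; since $\bDelta_{(m)}$ is small the same holds for $\wh{\bZ}_{(m)}^T\wh{\bZ}_{(m)}$, so $\lambda_{\max}(\wt{\bZ}_{(m)}(\wt{\bZ}_{(m)}^T\wt{\bZ}_{(m)})^{-1})=O_p(n^{-1/2})$ and $\lambda_{\max}((\wh{\bZ}_{(m)}^T\wh{\bZ}_{(m)})^{-1}\wh{\bZ}_{(m)}^T)=O_p(n^{-1/2})$. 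Feeding these in, the first two terms of the display are $O_p(n^{-1/2})O_p(q_m)=O_p(n^{-1/2}q_m)$, and the last is $O_p(n^{-1/2})\,O_p(n^{1/2}q_m+q_m^2)\,O_p(n^{-1/2})=O_p(n^{-1/2}q_m+n^{-1}q_m^2)$, whence $\lambda_{\max}(\wt{\bP}_{(m)}-\overline{\bP}_{(m)})=O_p(n^{-1/2}q_m+n^{-1}q_m^2)$.

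Substituting this back into the first display, and noting that every bound invoked (from Lemma~\ref{lem2}, (\ref{lamkk}), and the design regularity) holds uniformly over $m=1,\ldots,M$, gives $\lambda_{\max}(\bP_{(m)}-\wh{\bP}_{(m)})=O_p(n^{-1/2}q_m+n^{-1}q_m^2)$, as claimed. I expect the main obstacle to be this second step — the perturbation analysis of the partialled-out least-squares projection — and within it two delicate points: obtaining, and keeping uniform in $m$, the lower bound $\lambda_{\min}(\wt{\bZ}_{(m)}^T\wt{\bZ}_{(m)})\gtrsim n$ and transferring it to $\wh{\bZ}_{(m)}^T\wh{\bZ}_{(m)}$ (which is exactly what Condition~(C7) provides), and bookkeeping the quadratic-in-$\bDelta_{(m)}$ remainder carefully enough to land on the $n^{-1}q_m^2$ rate rather than something coarser.
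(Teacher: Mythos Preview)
Your decomposition of $\bP_{(m)}-\wh{\bP}_{(m)}$ is fine and equivalent to the paper's (the paper uses a five-term version obtained by adding and subtracting $\wt{\bP}_{(m)}$ rather than $\overline{\bP}_{(m)}$, but this is cosmetic). The substantive divergence is in how you handle $\lambda_{\max}(\wt{\bP}_{(m)}-\overline{\bP}_{(m)})$: you run a direct resolvent/telescoping expansion of the two hat matrices, whereas the paper bypasses all Gram-matrix analysis by treating $\wt{\bP}_{(m)}$ and $\overline{\bP}_{(m)}$ as orthogonal projections onto nearby subspaces and invoking a Davis--Kahan $\sin\Theta$ bound \cite{yuetal2015}, obtaining directly $\|\wt{\bP}_{(m)}-\overline{\bP}_{(m)}\|\leq C\,\lambda_{\min}(\bK_{(m)})^{-1}\|\bK_{(m)}-\wh{\bK}_{(m)}\|=O_p(n^{-1/2}q_m)$.

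The gap in your route is the role you assign to Condition~(C7). You claim that $\lambda_{\min}(\bK_{(m)})\geq c_K>0$ keeps $\bI-\bK_{(m)}$, and hence $\wt{\bZ}_{(m)}^T\wt{\bZ}_{(m)}$, away from being ill-conditioned. This does not follow: a lower bound on the smallest singular value of $\bK_{(m)}$ says nothing about the smallest singular value of $\bI-\bK_{(m)}$ (indeed $\bK_{(m)}\mathbf{1}=\mathbf{1}$, so $\bI-\bK_{(m)}$ is singular), and there is no path from (C7) to $\lambda_{\min}(\wt{\bZ}_{(m)}^T\wt{\bZ}_{(m)})\gtrsim n$. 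You also import the design assumption $\lambda_{\max}(\bZ_{(m)}^T\bZ_{(m)})=O(n)$, which is not among (C1)--(C8). In the paper's argument, (C7) enters in a completely different way: it is the eigengap that sits in the denominator of the Davis--Kahan bound, so no Gram-matrix eigenvalue control is ever needed. If you want to push your perturbation-expansion approach through, you would need to \emph{add} separate conditions guaranteeing $\lambda_{\min}(\wt{\bZ}_{(m)}^T\wt{\bZ}_{(m)})\asymp n$ uniformly in $m$; with the paper's stated hypotheses alone, the $\sin\Theta$ route is what makes the argument close.
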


\begin{proof}
	For $m=1, \ldots, M$, we have
	\begin{equation}\label{c1}
		\begin{aligned}
			\bP_{(m)} - \wh{\bP}_{(m)} &= (\wt{\bP}_{(m)} - \overline{\bP}_{(m)}) + (\bK_{(m)} - \wh{\bK}_{(m)}) + \wt{\bP}_{(m)}(\wh{\bK}_{(m)}-\bK_{(m)}) \\
			& + (\overline{\bP}_{(m)} - \wt{\bP}_{(m)})\bK_{(m)} + (\wt{\bP}_{(m)} - \overline{\bP}_{(m)})(\bK_{m} - \wh{\bK}_{(m)}).
		\end{aligned}
	\end{equation}
	
	Recalling (\ref{lamleq}), it suffices to determine the order of $\lambda_{\max}(\wt{\bP}_{(m)} - \overline{\bP}_{(m)})$ and $\lambda_{\max}(\bK_{(m)} - \wh{\bK}_{(m)})$. We have already quantified $\lambda_{\max}(\bK_{(m)} - \wh{\bK}_{(m)})$ in Eq. (\ref{lamkk}). Remind that $\wt{\bP}_{(m)}$ and $\overline{\bP}_{(m)}$ are projection matrices related to $\wt{\bZ}_{(m)} = (\bI - \bK_{(m)})\bZ_{(m)}$ and $\wh{\bZ}_{(m)} = (\bI - \wh{\bK}_{(m)})\bZ_{(m)}$, i.e. $\wt{\bP}_{(m)} = \wt{\bZ}_{(m)}(\wt{\bZ}^T_{(m)}\wt{\bZ}_{(m)})^{-1}\wt{\bZ}^T_{(m)}$ and $\overline{\bP}_{(m)} = \wh{\bZ}_{(m)}(\wh{\bZ}^T_{(m)}\wh{\bZ}_{(m)})^{-1}\wh{\bZ}^T_{(m)}$. We simplify the notations as $\wt{\bP}_{(m)} = \bH_{(\bI-\bK_{(m)})\bZ_{(m)}}$ and $\overline{\bP}_{(m)} = \bH_{(\bI-\wh{\bK}_{(m)})\bZ_{(m)}}$. $\bH_{A}$ represents the projection operator generated from $A$. Suppose the eigen decompositions of $\bK_{(m)}$ and $\wh{\bK}_{(m)}$ are $\bK_{(m)} = V\Lambda V^T$ and $\wh{\bK}_{(m)} = \wh{V}\wh{\Lambda}\wh{V}^T$, respectively. $\Lambda, \wh{\Lambda}\in \mathbb{R}^{n\times n}$ are diagonal matrices of eigenvalues, $V, \wh{V}\in \mathbb{R}^{n\times n}$ denote the corresponding eigenvector matrices. It is obvious that the deviation between $\wt{\bP}_{(m)}$ and $\overline{\bP}_{(m)}$ derives from the difference between $(\bI-\bK_{(m)})$ and $(\bI-\wh{\bK}_{(m)})$, so it holds that
	\begin{equation}\label{lampp0}
		\begin{aligned}
			& \|\wt{\bP}_{(m)} - \overline{\bP}_{(m)}\| \leq \|\bH_{\bI-\bK_{(m)}} - \bH_{\bI-\wh{\bK}_{(m)}}\| \\
			& = \|\bH_{\bK_{(m)}}-\bH_{\wh{\bK}_{(m)}}\| \\
			& \leq \|\sqrt{2}\sin \Theta(V,\wh{V})\| \\
			& \leq C\lambda_{\min}(\bK_{(m)})\|\bK_{(m)}-\wh{\bK}_{(m)}\| \\
			& = O_p(n^{-\frac{1}{2}}q_m),
		\end{aligned}
	\end{equation}
	where $C$ is a constant. The second and third inequalities follow from \cite{yuetal2015}.
	
	Finally according to Eq. (\ref{c1}), combining (\ref{lamleq}), Eqs. (\ref{lamkk}) and (\ref{lampp0}), we have
	\[
	\lambda_{\max}(\bP_{(m)}-\wh{\bP}_{(m)}) = O_p(n^{-\frac{1}{2}}q_m + n^{-1}q_m^2),
	\]
	which completes the proof.
\end{proof}

\subsection{Proof of Theorem 1}

\begin{proof}
	
	Firstly, it follows from Lemma \ref{lem2} that
	\begin{equation}\label{lamwh}
		\begin{aligned}
			\sup_{\bomega} \lambda_{\max}\big(\wh{\bP}(\bomega)\big) & = \sup_{\bomega} \lambda_{\max}\Big(\sum_{m=1}^M \omega_m \wh{\bP}_{(m)}\Big) \leq \sup_{\bomega} \sum_{m=1}^M \omega_m \lambda_{\max}(\wh{\bP}_{(m)}) \\
			&\leq \max_{1\leq m\leq M} \lambda_{\max}(\wh{\bP}_{(m)}) = O_p(1),
		\end{aligned}
	\end{equation}
	and similarly,
	\begin{equation}\label{lamw}
		\sup_{\bomega} \lambda_{\max}\big(\bP(\bomega)\big) = O_p(1).
	\end{equation}

	Let $\wh{\bA}(\bomega) = \bI-\wh{\bP}(\bomega)$ and $\bA(\bomega) = \bI-\bP(\bomega)$.
	From the definition of $L_n(\bomega)$, $\wh{C}_n(\bomega)$ and $R_n(\bomega)$, we have
	\[
	\begin{aligned}
		\wh{C}_n(\bomega) &= L_n(\bomega) + \|\varepsilon\|^2 - 2\mu^T(\bP(\bomega)-\wh{\bP}(\bomega))\mu -2\varepsilon^T(\bP(\bomega)-\wh{\bP}(\bomega))\varepsilon - 2\varepsilon^T(\bP(\bomega)-\wh{\bP}(\bomega))\mu \\
		&  - 2\varepsilon^T(\bP(\bomega)-\wh{\bP}(\bomega))^T\mu - 2\varepsilon^T(\wh{\bP}(\bomega)-\bP(\bomega))^T\bP(\bomega)\mu + 2\varepsilon^T\bP^T(\bomega)(\wh{\bP}(\bomega)-\bP(\bomega))\mu \\
		& + \mu^T(\bP(\bomega)+\wh{\bP}(\bomega))^T(\bP(\bomega)-\wh{\bP}(\bomega))\mu + \varepsilon^T(\bP(\bomega)+\wh{\bP}(\bomega))^T(\bP(\bomega)-\wh{\bP}(\bomega))\varepsilon \\
		& + \varepsilon^T(\bP(\bomega)+\wh{\bP}(\bomega))^T(\bP(\bomega)-\wh{\bP}(\bomega))\mu + 2\varepsilon^T\bA(\bomega)\mu \\
		& - 2[\varepsilon^T\bP(\bomega)\varepsilon - tr(\bP(\bomega)\bOmega)] - 2[tr(\bP(\bomega)\bOmega)- tr(\wh{\bP}(\bomega)\bOmega)],
	\end{aligned}
	\]
	and
	\[
	L_n(\bomega) - R_n(\bomega) = \varepsilon^T\bP^{T}(\bomega)\bP(\bomega)\varepsilon - tr(\bP^T(\bomega)\bP(\bomega)\bOmega) - 2\varepsilon^T\bP^T(\bomega)\bA(\bomega)\mu.
	\]
	So similar to the proof of Theorem 2.1 of \cite{li1987}, in order to prove Eq.(\ref{A1}), we need only to verify that
	\begin{equation}\label{A2}
		\sup_{\bomega} \frac{|\varepsilon^T\bA(\bomega)\mu|}{R_n(\bomega)} = o_p(1),
	\end{equation}
	\begin{equation}\label{A3}
		\sup_{\bomega} \frac{|\varepsilon^T\bP(\bomega)\varepsilon - tr(\bP(\bomega)\bOmega)|}{R_n(\bomega)} = o_p(1),
	\end{equation}
	\begin{equation}\label{A4}
		\sup_{\bomega}
		\frac{|\varepsilon^T\bP^{T}(\bomega)\bP(\bomega)\varepsilon - tr(\bP^{T}(\bomega)\bP(\bomega)\bOmega)|}{R_n(\bomega)} = o_p(1),
	\end{equation}
	\begin{equation}\label{A5}
		\sup_{\bomega}
		\frac{|\mu^T\bA^{T}(\bomega)\bP(\bomega)\varepsilon|}{R_n(\bomega)} = o_p(1),
	\end{equation}
	
	\begin{equation}\label{A6}
		\sup_{\bomega} \frac{|\varepsilon^T\big(\bP(\bomega)-\wh{\bP}(\bomega)\big)^T\mu|}{R_n(\bomega)} = o_p(1),
	\end{equation}
	\begin{equation}\label{A7}
		\sup_{\bomega} \frac{|\varepsilon^T\big(\bP(\bomega)-\wh{\bP}(\bomega)\big)\mu|}{R_n(\bomega)} = o_p(1),
	\end{equation}
	\begin{equation}\label{A8}
		\sup_{\bomega}
		\frac{|\varepsilon^T\big(\wh{\bP}(\bomega)-\bP(\bomega)\big)^T\bP(\bomega)\mu|}{R_n(\bomega)} = o_p(1),
	\end{equation}
	\begin{equation}\label{A9}
		\sup_{\bomega}
		\frac{|\varepsilon^T\bP^{T}(\bomega)\big(\wh{\bP}(\bomega)-\bP(\bomega)\big)\mu|}{R_n(\bomega)} = o_p(1),
	\end{equation}
	
	\begin{equation}\label{A10}
		\sup_{\bomega}
		\frac{|\varepsilon^T\big(\bP(\bomega)+\wh{\bP}(\bomega)\big)^T\big(\bP(\bomega)-\wh{\bP}(\bomega)\big)\mu|}{R_n(\bomega)} = o_p(1),
	\end{equation}
	\begin{equation}\label{A11}
		\sup_{\bomega}
		\frac{|\mu^T\big(\bP(\bomega)-\wh{\bP}(\bomega)\big)\mu|}{R_n(\bomega)} = o_p(1),
	\end{equation}
	\begin{equation}\label{A12}
		\sup_{\bomega}
		\frac{|\mu^T\big(\bP(\bomega)+\wh{\bP}(\bomega)\big)^T\big(\bP(\bomega)-\wh{\bP}(\bomega)\big)\mu|}{R_n(\bomega)} = o_p(1),
	\end{equation}
	\begin{equation}\label{A13}
		\sup_{\bomega} \frac{|\varepsilon^T\big(\bP(\bomega)-\wh{\bP}(\bomega)\big)\varepsilon|}{R_n(\bomega)} = o_p(1),
	\end{equation}
	\begin{equation}\label{A14}
		\sup_{\bomega}
		\frac{|\varepsilon^T\big(\bP(\bomega)+\wh{\bP}(\bomega)\big)^T\big(\bP(\bomega)-\wh{\bP}(\bomega)\big)\varepsilon|}{R_n(\bomega)} = o_p(1),
	\end{equation}
	
	and
	
	\begin{equation}\label{A15}
		\sup_{\bomega} \frac{|tr(\bP(\bomega)\bOmega)-tr(\wh{\bP}(\bomega)\bOmega)|}{R_n(\bomega)} = o_p(1).
	\end{equation}
	Note that Eqs. (\ref{A2})--(\ref{A5}) do not include any $\wh{\cdot}$ terms. From Eq. (\ref{lamw}) and conditions (C5)--(C6), Eqs. (\ref{A2})--(\ref{A5}) can be shown by using the same steps as in the proof of Theorem 1 of \cite{wanetal2010}.
	
	For proving Eq. (\ref{A14}), by (\ref{lamleq}), it is seen that
	\[
	\begin{aligned}
		& \sup_{\bomega}
		\frac{|\varepsilon^T\big(\bP(\bomega)+\wh{\bP}(\bomega)\big)^T\big(\bP(\bomega)-\wh{\bP}(\bomega)\big)\varepsilon|}{R_n(\bomega)}\\ & \leq \eta_n^{-1}\frac{1}{2}\sup_{\bomega}\left|\varepsilon^T\Big[\big(\bP(\bomega)+\wh{\bP}(\bomega)\big)^T\big(\bP(\bomega)-\wh{\bP}(\bomega)\big) + \big(\bP(\bomega)-\wh{\bP}(\bomega)\big)^T\big(\bP(\bomega)+\wh{\bP}(\bomega)\big)\Big]\varepsilon\right| \\
		& \leq \eta_n^{-1}\frac{1}{2}\|\varepsilon\|^2\cdot \sup_{\bomega}\lambda_{\max}\Big[\big(\bP(\bomega)+\wh{\bP}(\bomega)\big)^T\big(\bP(\bomega)-\wh{\bP}(\bomega)\big) + \big(\bP(\bomega)-\wh{\bP}(\bomega)\big)^T\big(\bP(\bomega)+\wh{\bP}(\bomega)\big)\Big] \\
		& \leq \eta_n^{-1}\|\varepsilon\|^2\cdot \sup_{\bomega}\lambda_{\max}\big(\bP(\bomega)+\wh{\bP}(\bomega)\big)\lambda_{\max}\big(\bP(\bomega)-\wh{\bP}(\bomega)\big) \\
		& \leq \eta_n^{-1}\|\varepsilon\|^2\cdot \sup_{\bomega}\big[\lambda_{\max}(\bP(\bomega))+\lambda_{\max}(\wh{\bP}(\bomega))\big]\cdot \sum_{m=1}^M \omega_m \lambda_{\max}(\bP_{(m)}-\wh{\bP}_{(m)}) \\
		& \leq n\eta_n^{-1} \cdot \frac{\|\varepsilon\|^2}{n}\cdot \sup_{\bomega}\big[\lambda_{\max}(\bP(\bomega))+\lambda_{\max}(\wh{\bP}(\bomega))\big]\cdot \max_{1\leq m\leq M} \lambda_{\max}(\bP_{(m)}-\wh{\bP}_{(m)}) \\
		& = o_p(1),
	\end{aligned}
	\]
	where the last step is from Eqs. (\ref{lamwh})--(\ref{lamw}), condition (C5) and Lemma \ref{lem3}. By Lemma \ref{lem3}, conditions (C5)--(C9), we can prove Eqs. (\ref{A10})--(\ref{A13}) in a similar way.

	For Eq. (\ref{A7}),
	\[
	\begin{aligned}
		& \sup_{\bomega} \frac{|\varepsilon^T\big(\bP(\bomega)-\wh{\bP}(\bomega)\big)\mu|}{R_n(\bomega)}\\
		& \leq \eta_n^{-1}\|\mu\|\cdot\sup_{\bomega}\|\big(\bP(\bomega)-\wh{\bP}(\bomega)\big)^T\varepsilon\|\\
		& \leq \eta_n^{-1}\cdot \|\mu\| \cdot\sup_{\bomega}\lambda_{\max}\big(\bP(\bomega)-\wh{\bP}(\bomega)\big)\cdot\|\varepsilon\| \\
		& \leq n\eta_n^{-1}\cdot\frac{\|\mu\|}{\sqrt{n}}\frac{\|\varepsilon\|}{\sqrt{n}}\cdot\max_{1\leq m\leq M} \lambda_{\max}\big(\bP_{(m)}-\wh{\bP}_{(m)}\big) \\
		& = o_p(1),
	\end{aligned}
	\]
	where the last step is from Lemma \ref{lem3}, conditions (C5), (C7)--(C9). Similarly, we can verify Eqs. (\ref{A6}), (\ref{A8})--(\ref{A9}) by Lemma \ref{lem3}, conditions (C5), (C7)--(C9) and Eqs. (\ref{lamwh})--(\ref{lamw}).

	Now we consider the last Eq. (\ref{A15}). Note that $\bOmega$ is a diagonal matrix,
	\[
	\begin{aligned}
		& \sup_{\bomega} \frac{|tr(\bP(\bomega)\bOmega)-tr(\wh{\bP}(\bomega))|}{R_n(\bomega)}\\ & = \sup_{\bomega}\frac{|tr[(\bP(\bomega)-\wh{\bP}(\bomega))\bOmega]|}{R_n(\bomega)}\\
		& \leq \eta_n^{-1}\sup_{\bomega}|tr(\bP(\bomega)-\wh{\bP}(\bomega))|\lambda_{\max}(\bOmega)\\
		& \leq n\eta_n^{-1}\max_{1\leq m\leq M}\lambda_{\max}\big(\bP_{(m)}-\wh{\bP}_{(m)}\big)\lambda_{\max}(\bOmega)\\
		& = o_p(1),
	\end{aligned}
	\]
	where the last step is from Lemma \ref{lem3}, conditions (C5), (C7)--(C8).
	This completes the proof of Theorem \ref{th1}.
\end{proof}

\subsection{Proof of Theorem 2}

\begin{proof}
	Note that
	\[
	\wh{C}_n(\bomega)|_{\bOmega = \wh{\bOmega}} = \wh{C}_n(\bomega) + 2tr(\wh{\bP}(\bomega)\wh{\bOmega})-2tr(\wh{\bP}(\bomega)\bOmega).
	\]
	From the result of Theorem \ref{th1}, to prove Eq. (\ref{B1}), it suffices to prove that
	\begin{equation}\label{B2}
		\sup_{\bomega} \frac{|tr(\wh{\bP}(\bomega)\wh{\bOmega})-tr(\wh{\bP}(\bomega)\bOmega)|}{R_n(\bomega)} = o_p(1).
	\end{equation}
	Let $\bQ_{(m)} = diag(\rho_{11}^{(m)}, \ldots, \rho_{nn}^{(m)})$ and $\bQ(\bomega) = \sum_{m=1}^M \omega_m \bQ_{(m)}$. To prove Eq. (\ref{B2}), we decompose the left-hand side of Eq. (\ref{B2}) into four parts as follows.
	\[
	\begin{aligned}
		& \sup_{\bomega} \frac{|tr(\wh{\bP}(\bomega)\wh{\bOmega})-tr(\wh{\bP}(\bomega)\bOmega)|}{R_n(\bomega)}\\
		& = \sup_{\bomega} \frac{|(Y-\wh{\bP}_{(M^*)}Y)^T\wh{\bQ}(\bomega)(Y-\wh{\bP}_{(M^*)}Y)-tr(\wh{\bQ}(\bomega)\bOmega)|}{R_n(\bomega)}\\
		& = \sup_{\bomega} \frac{|(\mu+\varepsilon)^T(\bI-\wh{\bP}_{(M^*)})^T\wh{\bQ}(\bomega)(\bI-\wh{\bP}_{(M^*)})(\mu+\varepsilon)-tr(\wh{\bQ}(\bomega)\bOmega)|}{R_n(\bomega)}\\
		& \leq \sup_{\bomega} \frac{|\mu^T(\bI-\wh{\bP}_{(M^*)})^T\wh{\bQ}(\bomega)(\bI-\wh{\bP}_{(M^*)})\mu|}{R_n(\bomega)} + \sup_{\bomega} \frac{2|\varepsilon^T(\bI-\wh{\bP}_{(M^*)})^T\wh{\bQ}(\bomega)(\bI-\wh{\bP}_{(M^*)})\mu|}{R_n(\bomega)}\\
		& \quad + \sup_{\bomega} \frac{|\varepsilon^T(\bI-\wh{\bP}_{(M^*)})^T\wh{\bQ}(\bomega)(\bI-\wh{\bP}_{(M^*)})\varepsilon|}{R_n(\bomega)} + \sup_{\bomega} \frac{|tr(\wh{\bQ}(\bomega)\bOmega)|}{R_n(\bomega)} \\
		& \equiv \Xi_1 + \Xi_2 + \Xi_3 + \Xi_4.
	\end{aligned}
	\]
	
	Now define $\rho = \max_{1\leq m\leq M}\max_{1\leq i\leq n} |\rho_{ii}^{(m)}|$. From conditions (C10)--(C11) and Lemma \ref{lem2}, we have $\max_{1\leq m\leq M}|tr(\wh{\bK}_{(m)})| = \max_{1\leq m\leq M}|tr(\bK_{(m)})| + O_p(n^{-\frac{1}{2}}\wt{q})$ and
	\begin{equation}\label{B3}
		\begin{aligned}
			\rho & \leq cn^{-1}\max_{1\leq m\leq M}|tr(\wh{\bP}_{(m)})| \\
			& \leq cn^{-1}\max_{1\leq m \leq M}|tr(\overline{\bP}_{(m)})| + cn^{-1}\max_{1\leq m\leq M}|tr(\overline{\bP}_{(m)}\wh{\bK}_{(m)})| + cn^{-1}\max_{1\leq m\leq M}|tr(\wh{\bK}_{(m)})| \\
			& \leq cn^{-1}\max_{1\leq m\leq M}rank(\overline{\bP}_{(m)}) + cn^{-1}\frac{1}{2}\max_{1\leq m\leq M}\big[\lambda_{\max}\big(\overline{\bP}_{(m)}\wh{\bK}_{(m)} + \wh{\bK}_{(m)}^T\overline{\bP}_{(m)}\big) \\
			& \qquad \cdot rank\big(\overline{\bP}_{(m)}\wh{\bK}_{(m)} + \wh{\bK}_{(m)}^T\overline{\bP}_{(m)}\big)\big] + cn^{-1}\max_{1\leq m\leq M}|tr(\wh{\bK}_{(m)})| \\
			& \leq cn^{-1}\wt{p} + cn^{-1}\cdot2\wt{p}\cdot\lambda_{\max}(\overline{\bP}_{(m)})\lambda_{\max}(\wh{\bK}_{(m)}) + cn^{-1}\max_{1\leq m\leq M}|tr(\wh{\bK}_{(m)})| \\
			& = cn^{-1}\wt{p} + cn^{-1}\wt{p}\cdot O_p(1) + cn^{-1}\cdot O_p(h^{-\wt{q}}+n^{-\frac{1}{2}}\wt{q}) \\
			& = O_p(n^{-1}\wt{p} + n^{-1}h^{-\wt{q}} + n^{-\frac{3}{2}}\wt{q}).
		\end{aligned}
	\end{equation}
	
	It follows from Lemma \ref{lem2}, conditions (C9)--(C10) and Eqs. (\ref{lamwh}) and (\ref{B3}) that
	\[
	\begin{aligned}
		\Xi_1 & \leq \eta_n^{-1}\sup_{\bomega}\lambda_{\max}(\wh{Q}(\bomega))\cdot\|(\bI-\wh{\bP}_{(M^*)})\mu\|^2\\
		& \leq \eta_n^{-1}\rho\cdot\|(\bI-\wh{\bP}_{(M^*)})\mu\|^2\\
		& \leq \eta_n^{-1}\rho\cdot\big[1+\lambda_{\max}(\wh{\bP}_{(M^*)})\big]^2\cdot\|\mu\|^2\\
		& = \eta_n^{-1}\cdot O_p(n^{-1}\wt{p} + n^{-1}h^{-\wt{q}} + n^{-\frac{3}{2}}\wt{q})\cdot O_p(1)\cdot O_p(n) \\
		& = O_p(\eta_n^{-1}\wt{p} + \eta_n^{-1}h^{-\wt{q}} + n^{-\frac{1}{2}}\eta_n^{-1}\wt{q}).
	\end{aligned}
	\]
	
	Using Lemma \ref{lem2}, conditions (C5), (C9)--(C10) and Eqs. (\ref{lamwh}) and (\ref{B3}), we obtain that
	\[
	\begin{aligned}
		\Xi_2 & \leq 2\eta_n^{-1}\cdot \|(\bI-\wh{\bP}_{(M^*)})\mu\|\cdot \sup_{\bomega}\|\wh{\bQ}(\bomega)(\bI-\wh{\bP}_{(M^*)})\varepsilon\|\\
		& \leq 2\eta_n^{-1}\cdot \|(\bI-\wh{\bP}_{(M^*)})\mu\|\cdot \sup_{\bomega}\lambda_{\max}(\wh{\bQ}(\bomega))\cdot \|(\bI-\wh{\bP}_{(M^*)})\varepsilon\| \\
		& \leq 2\eta_n^{-1}\cdot \|(\bI-\wh{\bP}_{(M^*)})\mu\|\cdot \rho\cdot \|(\bI-\wh{\bP}_{(M^*)})\varepsilon\| \\
		& \leq 2\eta_n^{-1}\cdot \big(1+\lambda_{\max}(\wh{\bP}_{(M^{*})})\big)\cdot \|\mu\|\cdot \rho\cdot \big(1+\lambda_{\max}(\wh{\bP}_{(M^{*})})\big)\cdot \|\varepsilon\| \\
		& = 2\eta_n^{-1}\cdot O_p(1)\cdot O_p(n^{\frac{1}{2}})\cdot O_p(n^{-1}\wt{p}+n^{-1}h^{-\wt{q}} + n^{-\frac{3}{2}}\wt{q})\cdot O_p(1)\cdot O_p(n^{\frac{1}{2}}) \\
		& = O_p(\eta_n^{-1}\wt{p} + \eta_n^{-1}h^{-\wt{q}} + n^{-\frac{1}{2}}\eta_{n}^{-1}\wt{q}).
	\end{aligned}
	\]
	
	Using Lemma \ref{lem2}, conditions (C5) and (C10), and Eqs. (\ref{lamwh}) and (\ref{B3}), we have
	\[
	\begin{aligned}
		\Xi_3 & \leq \eta_n^{-1}\cdot \sup_{\bomega}\lambda_{\max}(\wh{\bQ}(\bomega))\cdot \|(\bI-\wh{\bP}_{(M^*)})\varepsilon\|^2 \\
		& \leq \eta_n^{-1}\cdot \rho \big[1+\lambda_{\max}(\wh{\bP}_{(M^*)})\big]^2\cdot \|\varepsilon\|^2 \\
		& = \eta_n^{-1}\cdot O_p(n^{-1}\wt{p} + n^{-1}h^{-\wt{q}} + n^{-\frac{3}{2}}\wt{q})\cdot O_p(1)\cdot O_p(n) \\
		& = O_p(\eta_n^{-1}\wt{p} + \eta_n^{-1}h^{-\wt{q}}).
	\end{aligned}
	\]
	
	Using conditions (C5) and (C10), and Eqs. (\ref{lamwh}) and (\ref{B3}), we have
	\[
	\begin{aligned}
		\Xi_4 & \leq \eta_n^{-1}\cdot n\sup_{\bomega} \lambda_{\max}(\wh{\bQ}(\bomega))\cdot \lambda_{\max}(\bOmega) \\
		& \leq \eta_n^{-1}\cdot n\cdot \rho\cdot \lambda_{\max}(\bOmega) \\
		& = \eta_n^{-1}\cdot n\cdot O_p(n^{-1}\wt{p} + n^{-1}h^{-\wt{q}} + n^{-\frac{3}{2}}\wt{q})\cdot O_p(1) \\
		& = O_p(\eta_n^{-1}\wt{p} + \eta_n^{-1}h^{-\wt{q}} + n^{-\frac{1}{2}}\eta_n^{-1}\wt{q}).
	\end{aligned}
	\]
	
	Finally, it follows from conditions (C7)--(C8) and (C12) that $\Xi_1 = o_p(1)$, $\Xi_2 = o_p(1)$, $\Xi_3 = o_p(1)$ and $\Xi_4 = o_p(1)$. Therefore, we have verified Eq. (\ref{B2}) and this completes the proof.
	
\end{proof}


\end{document}